\documentclass[letterpaper, 10 pt, conference]{ieeeconf} 

\usepackage{amsmath,amssymb,amsfonts, bm}
\usepackage{mathtools}
\usepackage[hidelinks]{hyperref}
\usepackage{microtype}
\usepackage{xcolor}
\usepackage[normalem]{ulem}
\usepackage[caption=false,font=footnotesize]{subfig}
\usepackage{cite}
\usepackage[font=normalfont,font=footnotesize]{caption}

\newtheorem{theorem}{Theorem}[section]
\newtheorem{proposition}[theorem]{Proposition}
\newtheorem{lemma}[theorem]{Lemma}
\newtheorem{corollary}[theorem]{Corollary}
\newtheorem{remark}[theorem]{Remark}
\newtheorem{definition}[theorem]{Definition}
\newtheorem{example}[theorem]{Example}
\newtheorem{problem}{Problem}

\makeatletter
\renewcommand{\@begintheorem}[2]{\trivlist
  \item[\hskip \labelsep {\bfseries #1\ #2.}]}
\renewcommand{\@opargbegintheorem}[3]{\trivlist
  \item[\hskip \labelsep {\bfseries #1\ #2\ (#3).}]}
\makeatother

\renewcommand{\theenumi}{(\roman{enumi})}

\newcommand{\Bc}{\mathcal{B}}

\newcommand{\Oc}{\mathcal{O}}

\newcommand\cbf{\mathbf{c}}

\newcommand\gbf{\mathbf{g}}

\newcommand\ubf{\mathbf{u}}

\newcommand\xbf{\mathbf{x}}
\newcommand\ybf{\mathbf{y}}

\newcommand\Abf{\mathbf{A}}
\newcommand\Bbf{\mathbf{B}}
\newcommand\Cbf{\mathbf{C}}

\newcommand\Ibf{\mathbf{I}}

\newcommand\Pbf{\mathbf{P}}
\newcommand\Qbf{\mathbf{Q}}

\newcommand{\real}{{\mathbb{R}}}

\newcommand\zeros{{\mathbf{0}}}

\IEEEoverridecommandlockouts    
\allowdisplaybreaks

\newcommand{\longthmtitle}[1]{\mbox{}{\bf \textit{(#1).}}}

\newcommand{\oprocendsymbol}{\hbox{$\square$}}
\newcommand{\oprocend}{\relax\ifmmode\else\unskip\hfill\fi\oprocendsymbol}
\def\eqoprocend{\tag*{$\square$}}

\title{\LARGE \bf
Passivity-Based Control of Electrographic Seizures in a \\ Neural Mass Model of Epilepsy
}

\author{Gagan Acharya and Erfan Nozari%
\thanks{G. Acharya is with the Department of Electrical and Computer Engineering,
University of California, Riverside, CA 92521, USA
{\tt\small gacha002@ucr.edu}.}%
\thanks{E. Nozari is with the Departments of Mechanical Engineering, Electrical and Computer Engineering, Bioengineering, and the Neuroscience Graduate Program,
University of California, Riverside, CA 92521, USA
{\tt\small erfan.nozari@ucr.edu}.}%
}

\begin{document}

\maketitle

\begin{abstract}
Recent advances in neurotechnologies and decades of scientific and clinical research have made closed-loop electrical neuromodulation one of the most promising avenues for the treatment of drug-resistant epilepsy (DRE), a condition that affects over 15 million individuals globally. Yet, with the existing clinical state of the art, only 18\% of patients with DRE who undergo closed-loop neuromodulation become seizure-free. In a recent study, we demonstrated that a simple proportional feedback policy based on the framework of passivity-based control (PBC) can significantly outperform the clinical state of the art. However, this study was purely numerical and lacked rigorous mathematical analysis. The present study addresses this gap and provides the first rigorous analysis of PBC for the closed-loop control of epileptic seizures.
Using the celebrated Epileptor neural mass model of epilepsy, we analytically demonstrate that (i) seizure dynamics are, in their standard form, neither passive nor passivatable, (ii) epileptic dynamics, despite their lack of passivity, can be stabilized by sufficiently strong passive feedback, and (iii) seizure dynamics can be passivated via proper output redesign. To our knowledge, our results provide the first rigorous passivity-based analysis of epileptic seizure dynamics, as well as a theoretically-grounded framework for sensor placement and feedback design for a new form of closed-loop neuromodulation with the potential to transform seizure management in DRE.
% Passivity provides an energy-based framework for stability analysis and feedback design in nonlinear systems. This paper studies the passivity properties of the six-dimensional Epileptor neural mass model, a canonical dynamical model of focal seizure activity. Treating stimulation as an additive input, we analyze the control-affine structure of the model about the interictal equilibrium.

% We first show that the standard input–output pairing violates
% the matching condition of the Positive Real Lemma, and
% therefore the model is not locally passive under conventional
% sensing configurations. Nevertheless, certain passive feedback
% laws can still locally stabilize the system despite this structural
% obstruction. We then derive algebraic conditions under which
% output redesign restores passivity at the linearized level and
% provide explicit constraints on sensor placement that guarantee
% the existence of a quadratic storage function. Under these
% conditions, passive feedback ensures asymptotic stabilization
% of the equilibrium.

% The results provide the first systematic passivity analysis of the Epileptor model and establish structural conditions for passive neuromodulation design.

\end{abstract}

%%%%%%%%%%%%%%%%%%%%%%%%%%%%%%%%%%%%%%%%%%%%%%%%%%%%%%%%%%%%%%%%%%%%%%%%%%%%%%%%
\section{Introduction}\label{sec:intro}

%Epilepsy is a chronic neurological disorder that affects more than 50 million people worldwide and is characterized by the recurrent occurrence of seizures arising from abnormal, hypersynchronous neural activity in the brain~\cite{fisher2014ilae}. While anti-seizure medications provide effective control for many patients, approximately one third of individuals develop drug-resistant epilepsy (DRE), in which seizures persist despite adequate trials of pharmacological therapy. For these patients, surgical resection of the epileptogenic zone offers the highest chance of seizure freedom, but many patients are not candidates due to the location or distributed nature of their epileptic networks. As a result, implantable neuromodulation technologies have emerged as important therapeutic alternatives. In particular, the responsive neurostimulation (RNS) system detects abnormal electrographic activity and delivers electrical stimulation in closed loop to disrupt seizure evolution~\cite{jarosiewicz2021rns}. Despite promising progress, however, clinical outcomes remain highly variable, with only a minority of patients achieving seizure freedom~\cite{nair2020nine}. %These limitations highlight the need for new theoretical frameworks and control strategies that can more effectively regulate pathological neural dynamics underlying epileptic seizures.

Epilepsy is a chronic neurological disorder affecting more than 50 million people worldwide, characterized by recurrent seizures arising from abnormal, hypersynchronous neural activity~\cite{fisher2014ilae}. While anti-seizure medications effectively control seizures for many patients, approximately one-third develop drug-resistant epilepsy (DRE), in which seizures persist despite adequate pharmacological therapy. For these individuals, surgical resection of the epileptogenic zone offers the highest chance of seizure freedom, but many are not candidates due to the location or distributed nature of epileptic networks. Consequently, implantable neuromodulation technologies have emerged as important alternatives. In particular, the responsive neurostimulation (RNS) system detects abnormal electrographic activity and delivers electrical stimulation in a closed-loop manner to disrupt seizure evolution~\cite{jarosiewicz2021rns}. Despite this progress, clinical outcomes remain variable, with only a minority of patients achieving seizure freedom~\cite{nair2020nine}.

Motivated by the long-standing limitations of existing treatments, in a recent empirical study~\cite{acharya2026passive}, we proposed a radically different approach to closed-loop control of epileptic seizures based on the control-theoretic framework of passivity-based control (PBC). Passivity provides a robust, energy-based framework for analyzing the stability of nonlinear systems and feedback interconnections, 
% A passive system admits a storage function whose rate of change is bounded by the supplied power at an input–output port, and strict passivity directly yields Lyapunov stability under passive feedback. This structure makes 
making PBC particularly attractive for nonlinear control design~\cite{Khalil}. In the context of epilepsy, our proposed PBC-based approach uses neuromodulation to \textit{drain energy} from epileptic tissue, rather than injecting energy into it (status quo). Using extensive numerical analyses, we have shown the robust ability of this approach to suppress epileptic seizures and its significant advantage over the state of the art~\cite{acharya2026passive}. Nevertheless, this study was purely empirical and, to our knowledge, no theoretical analysis of passivity or PBC has been performed for any model of epilepsy. In the present work we address this gap and provide a rigorous analysis of the passivity and passivation of one of the most widely-used computational models of epilepsy.

% Seizure oscillations admit a natural energetic interpretation—persistent pathological activity may be viewed as sustained energy injection within the network—suggesting that dissipation-enforcing feedback could suppress such dynamics. Despite this structural compatibility, the passivity properties of commonly used seizure models have not been systematically characterized.

\textit{Related Work.}
In this paper we analyze the passivity properties of the Epileptor, one of the most widely used computational models of epileptic dynamics~\cite{Jirsa2014}. It provides a phenomenological neural-mass model of brain dynamics under epilepsy and has been widely used as a testbed for feedback control, including optimal feedback on networked variants \cite{Moosavi2023} and input shaping tailored to its dynamics \cite{Brogin2020}, and MPC-based neuromodulation evaluated on Epileptor-generated data \cite{Chatterjee2020}. Beyond methodological studies, the Epileptor has also served as a core dynamical model in several large-scale computational and translational projects. In particular, it has been integrated into whole-brain simulation frameworks which combine patient-specific structural connectomes with Epileptor-based network models to predict seizure propagation and support presurgical evaluation in clinical epilepsy cohorts \cite{jirsa2017virtual}.

To our knowledge, passivity-based feedback has not been studied for the Epileptor, or in the context of epilepsy in general. The clinically available neuromodulation system that is used for epilepsy~\cite{jarosiewicz2021rns} as well as various research studies that have sought to improve it~\cite{lundstrom2019chronic,zelmann2020closes,hamilton2018clinical,chadaide2025closed} implement \textit{active} stimulation paradigms that \textit{inject energy} into the tissue, often with empirically tuned stimulation parameters and no analytical guarantees. In modeling and control studies, seizure mitigation is commonly pursued via various active forms of proportional/integral feedback, or optimization-based designs such as LQR or MPC, with stability assessed through linearization, bifurcation analysis, or mere simulation~\cite{Schiff2010,Chatterjee2020,Brogin2020,Moosavi2023}. Related energy-oriented efforts have examined passivity observers \cite{Liu2019} and Hamiltonian formulations of neural dynamics \cite{vanderSchaftJeltsema2014}. However, a structural passivity characterization of the Epileptor %—particularly explicit input–output matching and sensor-placement conditions for quadratic storage functions—
has been lacking. The present study addresses this gap and provides the first passivity-based analysis of seizure dynamics.

\textit{Statement of Contributions.}
Our main contributions are threefold. 
First, we prove the asymptotic stability of the Epileptor under passive feedback, characterize the controller parameters that achieve this, and provide a sparse interpretable Lyapunov function as well as an estimate of the region of attraction of the asymptotically stable equilibrium point for a nominal controller.
Second, and despite the aforementioned stabilization, we prove that the Epileptor model is not passive in its standard form, both due to a lack of internal energy dissipation and a lack of proper input-output matching. We further establish an explicit necessary constraint on the Epileptor's output (i.e., sensor placement condition) for it to be able to be passive with a smooth storage function.
Finally, motivated by the latter output constraint, we provide an optimization-based framework for feedback passivation of the Epileptor, prove the strict passivity of the closed-loop system under a nominal set of controller parameters, and demonstrate the presence of a general tradeoff between asymptotic stability and passivity as well as a sparsity-promoting cost function for addressing it.
Together, these results provide the first theoretical analysis of PBC for closed-loop seizure suppression, and a rigorous foundation for our recent numerical observations of the great promise of PBC in epilepsy~\cite{acharya2026passive}.
% We first show that the standard input–output pairing violates the matching condition of the Positive Real Lemma, precluding local passivity under conventional sensing configurations. We then derive algebraic output conditions that restore passivity of the linearized system and establish explicit sensor-placement constraints guaranteeing the existence of a quadratic storage function. These results reveal a direct connection between output design and energy dissipation. Under the resulting matching conditions, passive feedback stabilizes the equilibrium. 
\section{Preliminaries}

In this section, we will review some preliminaries that form the foundation for the forthcoming analyses. 

% \subsection{Notation}

% For a square matrix $\Abf$, the \emph{spectral abscissa} is defined as
% \[
% \alpha(\Abf) := \max_{\lambda \in \sigma(\Abf)} \operatorname{Re}(\lambda),
% \]
% where $\sigma(\Abf)$ denotes the set of eigenvalues of $\Abf$.

\subsection{The Epileptor}\label{sec:prelim_ep}

The Epileptor is perhaps the most widely-used mean-field model of seizure dynamics~\cite{Saggio2020}. %It consists of a six-dimensional nonlinear state-space system of ordinary differential equations that reproduces several empirically observed aspects of focal seizures through coupled slow–fast subsystems~\cite{Jirsa2014}. 
Its structure reproduces the
coexistence of a stable equilibrium (interictal state) and a
stable limit cycle (ictal oscillations) via slow modulation of
a permittivity variable. This intrinsic bistability makes the
model particularly suitable for stability and passivity analysis
from a nonlinear control perspective.

The Epileptor dynamics consist of a system of $n = 6$ ordinary differential equations, given by 
\begin{subequations}\label{eq:dyn}
\begin{align}
\dot{x}_1 &= y_1 - f_1(x_1, x_2, z) - z + I_{rest,1} + u 
\\
\tau_1 \dot{y}_1 &= y_0 - 5 x_1^2 - y_1 
\\
\dot{x}_2 &= -y_2 \!+\! x_2 \!-\! x_2^3 \!+\! 2\zeta \!-\! 0.3(z\!-\!3.5) + I_{rest,2} + u 
\\
\tau_2 \dot{y}_2 &= -y_2 + f_2(x_2) 
\\
\dot{\zeta} &= -\gamma (\zeta - 0.1 x_1) 
\\
\tau_0 \dot{z} &= 4(x_1 - x_0) - z
\end{align}
\end{subequations}
where 
\begin{align*}
f_1(x_1,x_2,z) &=
\begin{cases}
x_1^3-3x_1^2, & x_1<0 \\
\big(x_2-0.6(z-4)^2\big)x_1, & x_1\ge 0
\end{cases} ,
\\
f_2(x_2) &=
\begin{cases}
6(x_2+0.25), \quad\qquad\ \  & x_2\ge -0.25 \\
0, & x_2<-0.25
\end{cases},
\end{align*}
and the constant parameters $x_0, y_0, \tau_0, \tau_1, \tau_2, I_{rest,1}, I_{rest,2}, \gamma$ are given in Table~\ref{tab:epileptor_params}. In particular, the constant currents $I_{rest,1}$ and $I_{rest,2}$
represent tonic baseline inputs to the fast and slow subsystems, respectively, and model persistent background excitation. The exogenous control input $u$ adds to these background inputs and will be used in the sequel for feedback control. % and set the operating regime of the autonomous system by shifting equilibria and influencing the presence of bistability between interictal and ictal attractors.
In this model, the linear combination of the fast and slow population activities, i.e.,
\begin{align}\label{eq:y}
y = c_1 x_1 + c_3 x_2 %= c_{\mathrm{std}}\, \xbf(t),
\end{align}
%with $c_{\mathrm{std}}=\begin{bmatrix}1 & 0 & 0 & -1 & 0 & 0\end{bmatrix}$
has been found to exhibit similar dynamics to empirically recorded local field potentials (LFPs) in patients with epilepsy. $y$ is thus often taken as the output, with the default values
\begin{align}\label{eq:stdc1c2}
    c_1 = -c_3 = 1.
\end{align}

\begin{table}[b]
\centering
\small
\caption{The default values of the Epileptor parameters~\cite{Jirsa2014}. The same values are used throughout this work.}
\label{tab:epileptor_params}
\begin{tabular}{p{0.6\linewidth} p{0.1\linewidth}}
\hline
\textbf{Parameter} & \textbf{Value} \\
\hline
Fast subsystem offset, $x_0$ & $-1.6$ \\
Fast subsystem baseline, $y_0$ & $1$ \\
Fast time constant, $\tau_1$ & $1$ \\
Slow time constant, $\tau_0$ & $2857$ \\
Slow subsystem time constant, $\tau_2$ & $10$ \\
Resting input, $I_{\mathrm{rest},1}$ & $3.1$ \\
Resting input, $I_{\mathrm{rest},2}$ & $0.45$ \\
Permittivity damping, $\gamma$ & $0.01$ \\
\hline
\end{tabular}
\end{table}

Given the separation of timescales imposed by the values of $\tau_0, \tau_1$ and $\tau_2$, the states
\begin{align*}
\xbf = 
\begin{bmatrix}
x_1 & y_1 & x_2 & y_2 & \zeta & z
\end{bmatrix}^\top \in \mathbb{R}^n,
\end{align*}
of the Epileptor are often grouped into the `fast subsystem' $(x_1,y_1)$ generating
rapid discharges, the `slow subsystem' $(x_2,y_2)$ exhibiting spike–wave discharges, the `ultraslow subsystem' $z$ governing the transitions between dynamical
regimes (i.e., into and out of seizures, cf. Fig.~\ref{fig:sim_baseline}), and the adaptation variable $\zeta$ that couples the fast and slow subsystems. For nominal parameter values (Table~\ref{tab:epileptor_params}), the autonomous
dynamics ($u = 0$) has no
asymptotically stable equilibria and instead exhibits a stable limit cycle, where in each period the output ($y$) transitions between high-frequency ictal (seizure)
oscillations and a meta-stable interictal (normal) baseline (see Fig.~\ref{fig:sim_baseline}). The latter transitions, importantly, are driven by the ultraslow state $z$ that acts as a bifurcation parameter for the remaining states. For more detailed discussions of the autonomous Epileptor dynamics, see~\cite{Jirsa2014}.

\begin{figure}
\centering
\subfloat[]{
\includegraphics[width=\linewidth]{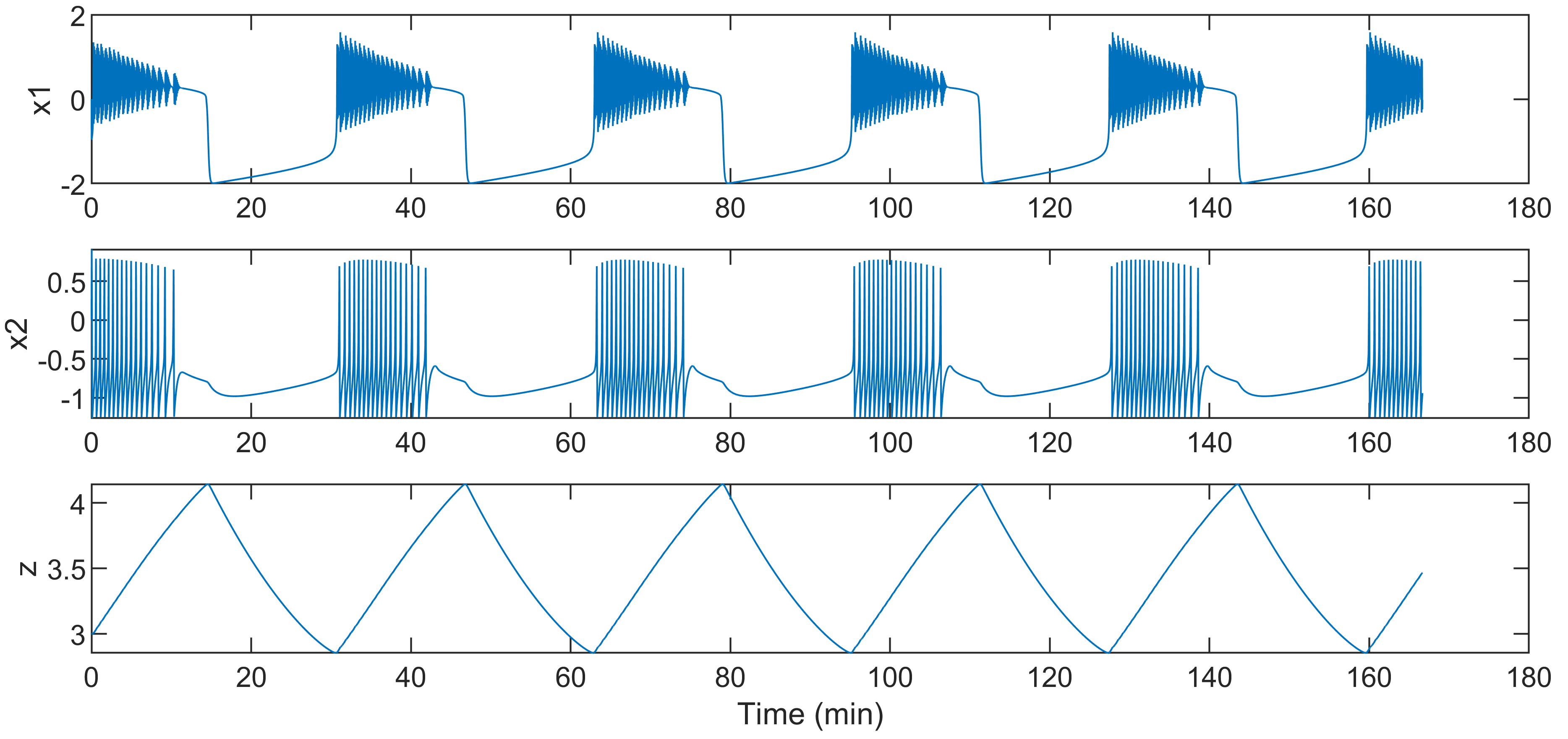}
}
\\[-1pt]
\subfloat[]{
\includegraphics[width=\linewidth]{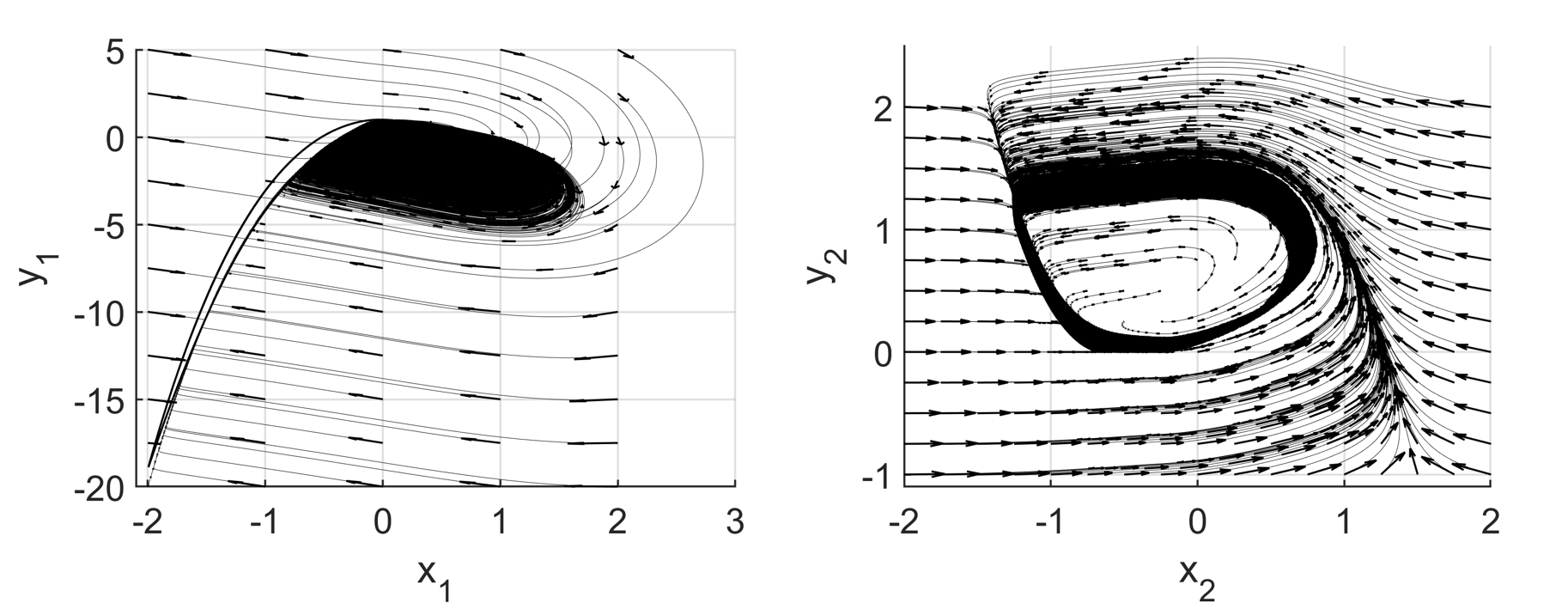}
}
\caption{(a) Simulated state trajectories of the autonomous Epileptor model ($u = 0$). One state is selected from each subsystem for illustration.
% Top: $x_1$. Middle: $x_2$. Bottom: slow permittivity variable $z$. 
The ultraslow modulation by $z$ drives bifurcations in the remaining (faster) states between interictal and ictal regimes.
(b) Phase–plane trajectories of the autonomous Epileptor model (same model as in (a)) plotted separately for the fast $(x_1,y_1)$ and slow $(x_2,y_2)$ subsystems.
% Left: $(x_1,y_1)$ subsystem. Right: $(x_2,y_2)$ subsystem. 
Trajectories converge to a stable limit cycle corresponding to ictal oscillations.}
\label{fig:sim_baseline}
\end{figure}

% \begin{figure}
%     \centering
    
%     \caption{}
%     \label{fig:flow_baseline}
% \end{figure}

Note that~\eqref{eq:dyn} has the control-affine form
% In the Epileptor equations, the baseline currents
% $I_{rest,1}$ and $I_{rest,2}$ appear additively in the
% $\dot{x}_1$ and $\dot{x}_2$ dynamics. Motivated by this
% structure, we model external stimulation as an additional
% current entering through the same channels. Introducing a
% scalar input $u(t)$, the system can therefore be written in control-affine form
\begin{align}\label{eq:affine}
\dot{\xbf} = f(\xbf) + \gbf u
\end{align}
where $f(\xbf)$ denotes the autonomous vector field
%(including all nonlinear terms and the constants
% $I_{rest,1}, I_{rest,2}$), 
and
\begin{equation}\label{eq:g}
\gbf =
\begin{bmatrix}
1 & 0 & 1 & 0 & 0 & 0
\end{bmatrix}^\top.
\end{equation}
% The vector $g$ reflects that stimulation perturbs only themembrane potential variables $x_1$ and $x_2$, while the remaining states evolve indirectly through intrinsic nonlinear couplings. 
This explicit input structure will play a central role
in the subsequent passivity analysis.

\subsection{Passivity}\label{subsec:prelim-passivity}

In this section, we recall several standard results on passivity for nonlinear
control-affine systems of the form
\begin{subequations}\label{eq:affine_recalled}
\begin{align}
\dot{\xbf} &= f(\xbf) + \gbf u,
\\ 
y &= h(\xbf),
\end{align}
\end{subequations}
where $\xbf \in \mathbb{R}^n$ and $u,y \in \mathbb{R}$, with $f(\mathbf{0})=\mathbf{0}$ and $h(\mathbf{0})=0$.
The notion of passivity formalizes physical systems that dissipate energy through the
existence of a storage function whose derivative bounds the energy supplied to the system through its input–output port, as given next.

\begin{definition}\longthmtitle{Passivity~\cite{Khalil}}
System \eqref{eq:affine_recalled} is \emph{passive} if there
exists a continuously differentiable positive semidefinite storage function
$V:\mathbb{R}^n \to \mathbb{R}_{\ge 0}$ such that
\begin{equation}
\dot{V}(\xbf) \le u\,y,
\label{eq:passivity_def}
\end{equation}
for all $(\xbf,u) \in \real^{n+1}$. The system is called \emph{strictly passive} if 
\[
\dot{V}(\xbf) \le u\,y - \kappa(\xbf)
\]
for all $(\xbf,u) \in \real^{n+1}$ and some positive definite function $\kappa$. The system is called \emph{locally passive} about the equilibrium $(\xbf^\star,u^\star)=(\mathbf{0},0)$ if~\eqref{eq:passivity_def} holds for all $(\xbf,u)$ in some neighborhood $\mathcal{N}$ of $(\xbf^\star,u^\star)$. 
\oprocend 
\end{definition}

For control-affine systems of the form \eqref{eq:affine_recalled},
the passivity inequality can be expressed directly in terms
of the system vector fields. Since
\begin{equation*}
\dot{V}(\xbf)
= \nabla V(\xbf)^\top f(\xbf)
+ \nabla V(\xbf)^\top \gbf\,u,
\end{equation*}
the condition in \eqref{eq:passivity_def} becomes
\[
\nabla V(\xbf)^\top f(\xbf)
+ \big(\nabla V(\xbf)^\top \gbf - h(\xbf)^\top\big)u
\le 0.
\]
Since this inequality must hold for all inputs
$u \in \real$, it follows that the so-called \textit{matching condition}
\begin{equation}\label{eq:matching}
\nabla V(\xbf)^\top \gbf = h(\xbf)^\top
\end{equation}
is necessary for passivity of~\eqref{eq:affine_recalled}. Under \eqref{eq:matching}, passivity reduces to the internal dissipation condition
\begin{equation}\label{eq:dissipation}
\nabla V(\xbf)^\top f(\xbf) \le 0.
\end{equation}

The passivity of a dynamical system has important implications for its feedback
stabilization. In particular, strictly passive systems
can be stabilized using simple output feedback laws, as shown next.

\begin{proposition}\longthmtitle{Feedback stabilization of strictly passive systems~\cite{Khalil}}\label{prop:asym-stable}
Assume that system \eqref{eq:affine_recalled} is \emph{strictly passive} with storage function $V(\xbf)$ and $\phi:\mathbb{R}\to\mathbb{R}$ is a (passive) function that satisfies 
\begin{align}\label{eq:yphiy}
y\,\phi(y) > 0 \quad \text{for all } y \neq 0.
\end{align}
Then, the static output feedback law
\begin{equation}\label{eq:passive_feedback}
u = -\phi(y),
\end{equation}
makes the origin of the closed-loop system asymptotically stable since
\[
\dot V(\xbf)
\le - y\,\phi(y) - \kappa(\xbf)
< 0 \quad \text{for} \quad \xbf \neq 0. \eqoprocend
\]
\end{proposition}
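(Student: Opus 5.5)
The natural route is Lyapunov's direct method, using the storage function $V$ of the strictly passive system as the Lyapunov function for the closed loop formed by \eqref{eq:affine_recalled} and \eqref{eq:passive_feedback}. First I would establish that $V$ is a legitimate Lyapunov candidate. It is $C^1$ and positive semidefinite by definition. I would also show that $V(\mathbf{0})=0$ and that $V$ is in fact positive definite. Positive definiteness follows from strict passivity: along any trajectory with $u\equiv 0$ starting at $\xbf_0\neq 0$ we have $\dot V\le-\kappa(\xbf)<0$. So if $V(\xbf_0)=0$, then $V$ would become negative, contradicting $V\ge 0$. Hence $V(\xbf)>0$ for $\xbf\neq0$. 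The same argument at $\xbf=\mathbf{0}$, combined with $\kappa(\mathbf{0})=0$ and $V\ge0$, lets us normalize $V(\mathbf{0})=0$ by subtracting a constant if needed. This normalization is harmless, since only $\nabla V$ enters the dissipation inequality.

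Next I would compute the derivative along closed-loop trajectories. The passivity inequality holds for every pair $(\xbf,u)$, so we may substitute $u=-\phi(h(\xbf))$ pointwise. This gives
\[
\dot V(\xbf)\le u\,y-\kappa(\xbf) = -y\,\phi(y)-\kappa(\xbf).
\]
Condition \eqref{eq:yphiy} gives $y\phi(y)\ge0$ for all $y$, with equality only at $y=0$, provided we also take $\phi(0)=0$. I would make this assumption explicit, since it is also needed for the origin to remain an equilibrium of the closed loop: $f(\mathbf{0})-\gbf\phi(0)=\mathbf{0}$. With this, $\dot V\le-\kappa(\xbf)<0$ for all $\xbf\neq0$, so $\dot V$ is negative definite.

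Finally, I would invoke Lyapunov's theorem (for instance \cite{Khalil}): a $C^1$ positive definite $V$ with negative definite derivative along trajectories yields local asymptotic stability of the origin. One technical point concerns existence of solutions. If $\phi$ is only continuous, the closed-loop vector field may lack uniqueness, but existence holds and the Lyapunov argument still applies. I would assume $\phi$ is locally Lipschitz for simplicity.

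The main obstacle is the positive definiteness of $V$, because the definition only grants semidefiniteness. I would handle it with the trajectory argument above, and I would cite the standard Khalil lemma that strict passivity implies $V$ is positive definite as a backup. The rest of the argument is immediate.
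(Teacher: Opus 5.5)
Your argument is the paper's own: use the storage function $V$ as a Lyapunov function, substitute $u=-\phi(y)$ into the strict-passivity inequality to get $\dot V\le -y\phi(y)-\kappa(\xbf)<0$ for $\xbf\neq\zeros$, and apply Lyapunov's theorem. Your extra checks are the details the paper delegates to \cite{Khalil}: positive definiteness of $V$ via the trajectory argument, $\phi(0)=0$ so the origin stays an equilibrium (automatic for continuous $\phi$ satisfying \eqref{eq:yphiy}), and local Lipschitzness for uniqueness.

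One step, however, is wrong and should be deleted: you cannot secure $V(\zeros)=0$ by subtracting a constant. The dissipation inequality is shift-invariant, but nonnegativity is not. $V-V(\zeros)\ge 0$ requires $V(\zeros)=\min V$, and nothing in your argument gives this; at the origin $\kappa(\zeros)=0$, so the trajectory argument says nothing there. Indeed, without $V(\zeros)=0$ the proposition is false. Take $\dot x=x+u$, $y=-2x/(1+x^2)^2$ and $V(x)=1/(1+x^2)\ge 0$; then
\[
\dot V = u\,y-\frac{2x^2}{(1+x^2)^2},
\]
so the system is ``strictly passive'' with $V(0)=1$. Yet $u=-y$ gives $\dot x = x+2x/(1+x^2)^2$, whose origin is unstable. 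Here $V(0)$ is the maximum of $V$, and $V-1\le 0$.

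The property $V(\zeros)=0$ must come from the definition. In Khalil's convention a positive semidefinite function vanishes at the origin, and the paper adopts this convention: the proof of Theorem~\ref{thm:matching_obstruction} uses positive semidefiniteness to get $V(\xbf^\star)=0$. Taking it from there, your trajectory argument yields positive definiteness and the rest of your proof is complete.
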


Similar to standard stability analysis, passivity can be generalized around any equilibrium point using a change of coordinates.
% In many applications, stability must be established not
% around the origin but around a general equilibrium point.
% Passivity properties can therefore be formulated relative
% to such operating points through an appropriate coordinate shift.
% \textbf{Strict passivity about an equilibrium.}
In particular, let $(\xbf^\star,u^\star)$ be a forced equilibrium of~\eqref{eq:affine_recalled} satisfying
\begin{align}\label{eq:equilibrium}
0 = f(\xbf^\star) + \gbf u^\star,  
\qquad
y^\star = h(\xbf^\star).
\end{align}
Passivity inequalities then need to be satisfied by the shifted variables
\begin{align}\label{eq:tildes}
\tilde \xbf = \xbf - \xbf^\star,
\qquad
\tilde u = u - u^\star,
\qquad
\tilde y = y - y^\star,
\end{align}
% The system is \emph{strictly passive about} $(\xbf^\star,u^\star)$
% if there exists a storage function $V(\tilde x)$ such that
% \[
% \dot V \le \tilde u\,\tilde y - \alpha(\tilde x),
% \]
% for some positive definite function $\alpha$. Under the feedback
and the passive feedback law
\begin{equation}
u = u^\star - \phi(y-y^\star),
\end{equation}
% with $\eta\,\phi(\eta)>0$ for $\eta\neq 0$,
% an argument identical to the non-shifted case yields
% \[
% \dot V < 0 \quad \text{for } \xbf \neq \xbf^\star.
% \]
% Therefore, $\xbf^\star$ is asymptotically stable.
makes a strictly passive system asymptotically stable around $\xbf^\star$ by Proposition~\ref{prop:asym-stable}.

While the above notions specialize naturally to linear systems, the passivity of a linear system admits a complete algebraic characterization, as follows.

\begin{proposition}\longthmtitle{Linear passivity conditions~\cite{Khalil,AndersonVongpanitlerd}}\label{prop:prl}
Consider the linear system
\begin{align*}
\dot \xbf &= \Abf \xbf + \Bbf \ubf, \\
\ybf &= \Cbf \xbf .
\end{align*}
Then the system is passive with a quadratic storage function
\[
V(\xbf)=\frac{1}{2}\xbf^\top \Pbf \xbf
\]
if and only if there exists a symmetric matrix $\Pbf=\Pbf^\top$ satisfying
\begin{subequations}\label{eq:PRL}
\begin{align}
\Pbf &\succeq \zeros, \label{eq:PRLa}\\
\Abf^\top \Pbf + \Pbf \Abf &\preceq \zeros, \label{eq:PRLb}\\
\Pbf \Bbf &= \Cbf^\top. \label{eq:PRLc}
\end{align}
\end{subequations}
If $\Abf^\top \Pbf + \Pbf \Abf \prec \zeros$, the system is strictly passive.
\oprocend
\end{proposition}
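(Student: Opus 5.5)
The argument reduces the passivity inequality for a quadratic storage function to a statement about a quadratic form in the joint variable $(\xbf,\ubf)$, and then reads off the three conditions in Proposition~\ref{prop:prl} from the fact that this form must be nonpositive everywhere.

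\emph{Step 1: expand the inequality.} With $V(\xbf)=\tfrac12\xbf^\top\Pbf\xbf$ and $\Pbf=\Pbf^\top$, differentiating along trajectories gives $\dot V=\tfrac12\xbf^\top(\Abf^\top\Pbf+\Pbf\Abf)\xbf+\xbf^\top\Pbf\Bbf\ubf$. Since $\ybf^\top\ubf=\xbf^\top\Cbf^\top\ubf$, the passivity inequality $\dot V\le \ubf^\top\ybf$ is equivalent to
\[
\tfrac12\xbf^\top(\Abf^\top\Pbf+\Pbf\Abf)\xbf+\xbf^\top(\Pbf\Bbf-\Cbf^\top)\ubf\le 0
\quad \text{for all } (\xbf,\ubf).
\]

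\emph{Step 2: sufficiency.} Assume \eqref{eq:PRL} holds. Condition \eqref{eq:PRLc} removes the cross term, and \eqref{eq:PRLb} makes the remaining term nonpositive. Condition \eqref{eq:PRLa} makes $V$ positive semidefinite, so $V$ is an admissible storage function. Hence the system is passive.

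\emph{Step 3: necessity.} Assume the system is passive with storage function $\tfrac12\xbf^\top\Pbf\xbf$, where $\Pbf$ is symmetric.
\begin{itemize}
\item A storage function must be positive semidefinite, which gives \eqref{eq:PRLa} directly.
\item Setting $\ubf=0$ in the displayed inequality gives \eqref{eq:PRLb}.
\item Fix $\xbf$ and write $\mathbf{w}=(\Pbf\Bbf-\Cbf^\top)^\top\xbf$. Choose $\ubf=s\,\mathbf{w}$ with $s\in\real$. The left-hand side becomes a bounded term plus $s\|\mathbf{w}\|^2$. Letting $s\to\infty$ forces $\mathbf{w}=\zeros$.
\end{itemize}
The last point is the linear version of the matching condition \eqref{eq:matching}. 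Since $\mathbf{w}=\zeros$ for every $\xbf$, we get $\Pbf\Bbf=\Cbf^\top$, which is \eqref{eq:PRLc}.

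\emph{Step 4: strict passivity.} Suppose $\Abf^\top\Pbf+\Pbf\Abf\prec\zeros$, together with \eqref{eq:PRLa} and \eqref{eq:PRLc}. Then
\[
\dot V=\ubf^\top\ybf-\kappa(\xbf),\qquad \kappa(\xbf)=-\tfrac12\xbf^\top(\Abf^\top\Pbf+\Pbf\Abf)\xbf .
\]
This $\kappa$ is positive definite, which is the definition of strict passivity.

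The only step that is not direct is the necessity of \eqref{eq:PRLc}, which relies on the scaling argument in $\ubf$. The remaining steps are routine algebra.
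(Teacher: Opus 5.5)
Your proof is correct and follows the same route as the paper: both expand $\dot V - \ubf^\top\ybf$ into $\tfrac12\xbf^\top(\Abf^\top\Pbf+\Pbf\Abf)\xbf + \xbf^\top(\Pbf\Bbf-\Cbf^\top)\ubf$ and read the three conditions off this quadratic form. You make explicit the steps the paper leaves implicit (setting $\ubf=0$ to get \eqref{eq:PRLb}, and the scaling argument in $\ubf$ to get \eqref{eq:PRLc}), and you check sufficiency directly where the paper instead cites Khalil's Lemma~6.4.
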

\begin{proof}
The sufficiency is shown in~\cite[Lemma 6.4]{Khalil}. The necessity follows from the quadratic form of $V$ and the fact that the dissipative inequality~\eqref{eq:passivity_def} yields
\begin{align*}
\dot V(\xbf) -\ubf^\top\ybf= -\frac{1}{2}\xbf^\top\Qbf\xbf+\xbf^\top(\Pbf\Bbf-\Cbf^\top)\ubf\leq0,
\end{align*}

where $\Qbf=-(\Abf^\top \Pbf + \Pbf \Abf)$.
\end{proof}

Note that in Proposition~\ref{prop:prl}, condition~\eqref{eq:PRLc} enforces an input–output
matching relation while~\eqref{eq:PRLb} guarantees
internal stability. Finally, similar to asymptotic stability, the passivity of a local linearization of a nonlinear system is related to the local passivity of the latter, as shown next.

%\erfan{(1) The following seem if and only if. What's the difference? (2) I couldn't find the latter one in Khalil, can you cite specific results that you use (also for other results in this section)? (3) I don't have Sepuchre's book, let's check its exact formulation together.}

% \begin{lemma}[Local Necessity]
% Let $\xbf^\star$ be an equilibrium of \eqref{eq:affine_recalled}. 
% If the nonlinear system is locally passive about $\xbf^\star$ with twice continuously differentiable storage $V$ satisfying $\nabla^2 V(\xbf^\star)=P\succ0$, then the linearization $(A,B,C)$ at $\xbf^\star$ satisfies \eqref{eq:PRL}.
% \hfill\cite{Sepulchre}
% \end{lemma}

\begin{proposition}\longthmtitle{Local strict passivity from linearization~\cite[Cor. 4]{Xia2015}}
\label{prop:local-suff}
Consider the nonlinear system in~\eqref{eq:affine_recalled} and assume that its linearization around an equilibrium point $(\xbf^\star, u^\star, y^\star)$ is strictly passive. Then the nonlinear system is locally strictly passive about $(\xbf^\star, u^\star, y^\star)$. \oprocend
% In particular, there exists a neighborhood $\mathcal{N}$ of $(\xbf^\star,u^\star)$ and a positive definite storage function $V(\xbf)$ such that for all $(\xbf,u)\in\mathcal{N}$,
% \begin{align*}
% (u-u^\star) (y-y^\star) - \dot V(\xbf) \ge 0 . \eqoprocend
% \end{align*}
\end{proposition}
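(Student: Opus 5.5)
The plan is to use the linear storage function directly as a storage function for the nonlinear system, and to absorb the nonlinear remainder terms by shrinking the neighborhood. Work throughout in the shifted coordinates~\eqref{eq:tildes}. Write the nonlinear system as
\begin{align*}
\dot{\tilde\xbf} &= \Abf\tilde\xbf + \Bbf\tilde u + r(\tilde\xbf), \\
\tilde y &= \Cbf\tilde\xbf + s(\tilde\xbf),
\end{align*}
where $\Abf=\partial f/\partial\xbf(\xbf^\star)$, $\Bbf=\gbf$, and $\Cbf=\partial h/\partial \xbf(\xbf^\star)$. The input enters affinely with the constant vector $\gbf$, so it produces no remainder. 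The terms $r$ and $s$ are $o(\|\tilde\xbf\|)$, which follows from $f$ and $h$ being continuously differentiable near $\xbf^\star$.

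By assumption the linearization is strictly passive. By Proposition~\ref{prop:prl}, there is therefore a $\Pbf\succeq\zeros$ with $\Pbf\Bbf=\Cbf^\top$ and $\Abf^\top\Pbf+\Pbf\Abf=-\Qbf$, where $\Qbf\succ\zeros$. Take $V(\tilde\xbf)=\tfrac12\tilde\xbf^\top\Pbf\tilde\xbf$ as the candidate storage function for the nonlinear system. Differentiating along trajectories gives
\[
\dot V = -\tfrac12\tilde\xbf^\top\Qbf\tilde\xbf + \tilde\xbf^\top\Pbf\Bbf\,\tilde u + \tilde\xbf^\top\Pbf r(\tilde\xbf).
\]
Substituting $\tilde\xbf^\top\Pbf\Bbf=\Cbf\tilde\xbf=\tilde y-s(\tilde\xbf)$ yields
\[
\dot V = \tilde u\tilde y - \tfrac12\tilde\xbf^\top\Qbf\tilde\xbf + \tilde\xbf^\top\Pbf r(\tilde\xbf) - s(\tilde\xbf)\tilde u .
\]

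The term $\tilde\xbf^\top\Pbf r$ is $o(\|\tilde\xbf\|^2)$. Hence, on a small ball, it is dominated by $\tfrac14\lambda_{\min}(\Qbf)\|\tilde\xbf\|^2$.

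The main obstacle is the cross term $s(\tilde\xbf)\tilde u$. It is only $o(\|\tilde\xbf\|)\,|\tilde u|$, and no pure state dissipation can dominate it uniformly in $\tilde u$. Two routes handle this.

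\emph{First route.} In the setting of this paper $h$ is linear, as in~\eqref{eq:y}, so $s\equiv0$. The term vanishes, and we obtain $\dot V\le\tilde u\tilde y-\tfrac14\lambda_{\min}(\Qbf)\|\tilde\xbf\|^2$ on a neighborhood $\mathcal N$. This gives local strict passivity with $\kappa(\tilde\xbf)=\tfrac14\lambda_{\min}(\Qbf)\|\tilde\xbf\|^2$.

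\emph{Second route, for general $h$.} Use the neighborhood of $(\xbf^\star,u^\star)$ in the definition of local passivity, which also bounds $|\tilde u|$. Alternatively, invoke~\cite[Cor.~4]{Xia2015} directly, which permits a modified storage function $V+\tilde\xbf^\top\Pbf_2(\tilde\xbf)$ whose gradient correction enforces the matching condition~\eqref{eq:matching} exactly up to higher order. I would try the direct quadratic argument first, since it suffices for the linear outputs used throughout the paper.

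A last subtlety is that $\Pbf$ may be only semidefinite, in which case $V$ is merely positive semidefinite. This is allowed by the definition of passivity, since the positive definiteness required for strictness sits in $\kappa$, supplied by $\Qbf\succ\zeros$.

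Finally, the Epileptor's $f$ is only piecewise smooth, so the argument must be applied at an equilibrium lying in the interior of a smooth region of $f_1$ and $f_2$. That is where the Jacobian and the $o(\cdot)$ estimates above are valid.
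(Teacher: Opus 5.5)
The paper gives no argument of its own here; it imports the result from \cite[Cor.~4]{Xia2015}. Your first route is a correct, self-contained proof of the case the paper actually uses: constant $\gbf$ and linear output $y=\cbf^\top\xbf$. Because $\Pbf\gbf=\cbf$ makes the input terms cancel exactly, $\dot V-\tilde u\tilde y=-\tfrac12\tilde\xbf^\top\Qbf\tilde\xbf+\tilde\xbf^\top\Pbf r(\tilde\xbf)$ does not depend on $\tilde u$. The $o(\|\tilde\xbf\|^2)$ remainder is then absorbed on a ball.

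The citation covers general $h$, but it only asserts that some storage function works on some neighborhood of $(\xbf^\star,u^\star)$. Your argument delivers more:
\begin{itemize}
\item the storage function is the LMI matrix $\Pbf$ itself;
\item the inequality holds for every $\tilde u\in\real$;
\item the admissible region is exactly where $V$ decreases at rate $\kappa$ along the closed-loop drift (for Theorem~\ref{thm:passive-newc1c2}, $\nabla V(\tilde\xbf)^\top\big(f(\xbf^\star+\tilde\xbf)+\gbf u^\star\big)\le-\kappa(\tilde\xbf)$).
\end{itemize}
That is precisely the form asserted in Theorem~\ref{thm:passive-newc1c2} (``for all $u\in\real$'' on a ball). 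It is also the reason an SOS Lyapunov program like~\eqref{eq:sostools} can certify $r_1$ there. So your route supports that theorem more directly than the citation does.

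Several things need tightening.

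\textbf{Second route.} Your second route for general $h$ does not work as stated. With $|\tilde u|\le\delta$, the cross term is bounded by $\delta|s(\tilde\xbf)|$. This is dominated by $\tfrac14\lambda_{\min}(\Qbf)\|\tilde\xbf\|^2$ only if $s(\tilde\xbf)=O(\|\tilde\xbf\|^2)$, e.g.\ when $h\in C^2$ and $\delta$ is small relative to $\lambda_{\min}(\Qbf)$. For a merely $C^1$ output it fails. Take $\dot x=-x+u$, $y=x+x|x|^{1/2}$, $V=\tfrac12x^2$: at $u=-\delta/2$ and small $x>0$ this gives $\dot V-uy=-x^2+\tfrac{\delta}{2}x^{3/2}>0$.

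The fix is to restore matching exactly. Write $\tilde\xbf=\tau\gbf+w$ with $w\perp\gbf$, and add $W(\tilde\xbf)=\int_0^\tau s(\sigma\gbf+w)\,d\sigma$ to the storage. This $W$ satisfies $\nabla W^\top\gbf=s$, $W=o(\|\tilde\xbf\|^2)$ and $\nabla W=o(\|\tilde\xbf\|)$. Hence $\tfrac{d}{dt}(V+W)-\tilde u\tilde y=-\tfrac12\tilde\xbf^\top\Qbf\tilde\xbf+o(\|\tilde\xbf\|^2)$ for all $\tilde u$.

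\textbf{Semidefinite $\Pbf$.} Your ``last subtlety'' cannot occur. If $\Pbf\succeq\zeros$ and $\Abf^\top\Pbf+\Pbf\Abf\prec\zeros$, then every $v\in\ker\Pbf$ gives $v^\top(\Abf^\top\Pbf+\Pbf\Abf)v=0$, so $\Pbf\succ\zeros$. This is also what keeps $V+W\ge0$ near $\xbf^\star$.

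\textbf{Reading the hypothesis.} Proposition~\ref{prop:prl} states only the ``if'' direction for strictness. Reading ``strictly passive linearization'' as the strict LMI is therefore an interpretation, not a consequence of that proposition. It is, however, the interpretation the paper itself uses.
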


% \begin{proof}
% This result follows from Theorem~2 and Corollary~4 in~\cite{Xia2015}.
% \end{proof}

\section{Problem Formulation}

As noted in Section~\ref{sec:intro}, this work is motivated by our recent empirical observations~\cite{acharya2026passive} as well as the structural robustness of PBC to model uncertainty, a property that is invaluable in feedback control of a complex system such as the brain. Our main goal is to understand what \textit{guarantees} we can provide for the empirically-observed stability of the Epileptor model under passive feedback. We approach this goal in two steps, as formulated next.

\begin{problem}\label{prob}
Consider the Epileptor model in~\eqref{eq:dyn}-\eqref{eq:stdc1c2} and assume that only the output $y$ is available for feedback. Determine whether, and under what conditions,
\begin{itemize}
    \item[(a)] epileptic seizures in the model can be \textit{stabilized} using passive feedback;
    \item[(b)] the system can be \textit{passivated} via feedback. \oprocend
\end{itemize}
\end{problem}

\begin{remark}
Note that the objective of Problem~\ref{prob}b is stronger than that of Problem~\ref{prob}a. Given the input-affine nature of the model, a standard feedback form for PBC is (\cite[Eq.~(14.74)]{Khalil})
\begin{align}\label{eq:fbform}
    u = u^\star - \phi(y - y^\star) + v
\end{align}
where $u^\star$ and $y^\star$ represent the target equilibrium (as in~\eqref{eq:equilibrium}), $\phi(\cdot)$ denotes a passive output feedback law as in~\eqref{eq:yphiy}, and $v$ provides additional capacity for control. In this context, Problem~\ref{prob}a only seeks asymptotic stability when $v = 0$, whereas Problem~\ref{prob}b seeks passivity with respect to $v$. The latter then guarantees that the system remains closed-loop stable under any passive feedback from $y$ to $v$, creating an opportunity for adding arbitrarily-many additional elements to the overall seizure suppression system (e.g., seizure detection, saturations for safety limits, low/band/high-pass filters, etc.) as long as the added elements remain passive (e.g., use passive electronic elements). \oprocend
\end{remark}

We address Problem~\ref{prob}a in Section~\ref{sec:passivity_standard} and Problem~\ref{prob}b in Section~\ref{sec:motiv_output}

%%%%%%%%%%%%%%%%%%%%%%%%%%%%%%%%%%%%%%%%%%%%%%%%%%%%%%%%%%%%%%%%%%%%%%%%%%%%%%%%
\section{Stabilization of the Epileptor via Passive Feedback}\label{sec:passivity_standard}

In this section, we focus on the ability of passive feedback to stabilize the Epileptor model, as motivated by our recent affirmative empirical observations~\cite{acharya2026passive} and formulated in Problem~\ref{prob}. 
To begin, let 
\begin{subequations}\label{eq:lin}
\begin{align}
\dot{\tilde{\xbf}} &= \Abf \tilde{\xbf} + \gbf \tilde{u}, 
\\ 
\tilde y &= \cbf^\top \tilde{\xbf},
\end{align}
\end{subequations}
denote the linearization of~\eqref{eq:affine},~\eqref{eq:y} around an equilibrium point ($x^\star$, $u^\star$), with
\begin{align}\label{eq:stdc}
\cbf = \begin{bmatrix}
    c_1 & 0 & c_3 & 0 & 0 & 0
\end{bmatrix}^\top
\end{align}
and $\tilde{\xbf}$, $\tilde{u}$, and $\tilde{y}$ denoting the centered variables in~\eqref{eq:tildes}. 
%Throughout this section, we consider scalar outputs $y=c\xbf$ with $c \in \mathbb{R}^{1\times 6}$.
%
As summarized in Section~\ref{subsec:prelim-passivity}, if a system is \textit{strictly passive}, it can be asymptotically stabilized by any form of passive feedback. The first question that arises, therefore, is whether (locally) strictly passive and, perhaps, that allows for its asymptotic stabilization under passive feedback.
The next result formalizes the rather straightforward observation that this is \textit{not} the case.

\begin{lemma}\longthmtitle{Non-passivity of the Epileptor}\label{lem:nonpassive}
The standard Epileptor model~\eqref{eq:dyn}-\eqref{eq:stdc1c2} is not strictly passive around any equilibrium points.
\end{lemma}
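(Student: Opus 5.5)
The plan is to exploit the structural mismatch between the input vector $\gbf$ in~\eqref{eq:g} and the standard output vector $\cbf=[1,0,-1,0,0,0]^\top$ from~\eqref{eq:stdc1c2},~\eqref{eq:stdc}. Together with the matching condition~\eqref{eq:matching}, this mismatch will force any candidate storage function to be constant along a whole line through the equilibrium. That is incompatible with a strictly positive dissipation term $\kappa$.

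\textbf{Setup.} I argue by contradiction. Fix an arbitrary equilibrium $(\xbf^\star,u^\star)$ as in~\eqref{eq:equilibrium} and work in the shifted variables~\eqref{eq:tildes}. Suppose a continuously differentiable, positive semidefinite $V$ exists with $V(\zeros)=0$ (the usual convention, as in~\cite{Khalil}) and
\[
\dot V\le \tilde u\,\tilde y-\kappa(\tilde\xbf)
\]
for all $(\tilde\xbf,\tilde u)$, at least in a neighborhood of the origin. Since the output is linear, $\tilde y=\tilde x_1-\tilde x_2$, where $\tilde x_2$ is the third state component. Because the inequality must hold for every $\tilde u$, the necessary matching condition~\eqref{eq:matching} holds exactly as derived in Section~\ref{subsec:prelim-passivity}:
\[
\nabla V(\tilde\xbf)^\top\gbf=\tilde x_1-\tilde x_2 .
\]

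\textbf{Key computation.} Restrict this identity to the line $\mathcal L=\{s\gbf:s\in\real\}$. On $\mathcal L$ we have $\tilde x_1=\tilde x_2=s$, so $\tilde y=0$. Hence
\[
\frac{d}{ds}V(s\gbf)=\nabla V(s\gbf)^\top\gbf=0,
\]
and therefore $V(s\gbf)=V(\zeros)=0$ for all $s$ (for all small $|s|$ in the local version). Since $V\ge0$ everywhere, every point $s\gbf$ is a global (respectively local) minimizer of the $C^1$ function $V$. It follows that $\nabla V(s\gbf)=\zeros$.

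\textbf{Contradiction.} Evaluate the strict passivity inequality at $\tilde\xbf=s\gbf$ with $s\neq0$ and any $\tilde u$. The left-hand side is $\dot V=\nabla V(s\gbf)^\top(f+\gbf\tilde u)=0$. The right-hand side is $\tilde u\cdot0-\kappa(s\gbf)<0$, because $\kappa$ is positive definite and $s\gbf\neq\zeros$. This contradiction holds for every equilibrium, since nothing above used the specific value of $\xbf^\star$: only the affine input structure and the output vector $\cbf$ entered.

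\textbf{Expected obstacle.} The delicate step is the normalization $V(\zeros)=0$, i.e., that the equilibrium is a minimizer of the storage function. I would justify it via the definition of positive semidefiniteness in~\cite{Khalil}, which includes $V(\zeros)=0$. If a reader allows $V(\zeros)>0$, I would argue instead at the level of Proposition~\ref{prop:prl} for the linearization~\eqref{eq:lin}. There,~\eqref{eq:PRLc} requires $\Pbf\gbf=\cbf$. Then $\gbf^\top\Pbf\gbf=\gbf^\top\cbf=1-1=0$, and with $\Pbf\succeq\zeros$ this forces $\Pbf\gbf=\zeros$. That gives $\cbf=\zeros$, which is absurd. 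This linear obstruction shows that even the linearization fails the passivity conditions of Proposition~\ref{prop:prl}.
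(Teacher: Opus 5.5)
Your proof is correct, but it takes a different route from the paper. The paper argues through internal dissipation rather than matching. It computes the four equilibria of the unforced model ($u=0$) and finds numerically that the Jacobian at each has positive spectral abscissa. It then concludes instability by Lyapunov's indirect method and invokes the fact that strict passivity would make the unforced equilibrium asymptotically stable (\cite[Lem.~6.7]{Khalil}). You instead use only the structural identity $\gbf^\top\cbf=0$: integrating the matching condition along $\{s\gbf\}$ forces $V$ and $\nabla V$ to vanish there, which contradicts $\kappa(s\gbf)>0$. This is essentially a $C^1$, strict-passivity version of Theorem~\ref{thm:matching_obstruction} and Corollary~\ref{cor:std_not_passive}.

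What your route buys:
\begin{itemize}
\item It is analytic rather than numerical and needs only $C^1$ storage.
\item It never uses $f$ or $\xbf^\star$, so it covers every forced equilibrium $(\xbf^\star,u^\star)$, whereas the paper only treats $u^\star=0$. It also covers every static feedback $\phi$.
\end{itemize}
What the paper's route buys:
\begin{itemize}
\item It is independent of the output. It shows that at $u^\star=0$ no choice of $\cbf$, including those with $c_1+c_3>0$, can yield strict passivity. This is the ``lack of internal dissipation'' half of the paper's narrative and explains why a bias $u^\star$ is needed in Theorem~\ref{thm:passive-newc1c2}.
\item It uses only the unforced dynamics, so it does not require the dissipation inequality to hold for all $\tilde u$.
\end{itemize}

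A remark on your fallback. The linear PRL argument only excludes quadratic storage for the linearization, so on its own it would not settle the nonlinear claim if $V(\zeros)>0$ were allowed. You do not need it. The paper itself uses the convention $V(\xbf^\star)=0$ (proof of Theorem~\ref{thm:matching_obstruction}). In any case, global matching together with $\cbf^\top\gbf=0$ gives $V(\tilde\xbf_0+s\gbf)=V(\tilde\xbf_0)+s\,\cbf^\top\tilde\xbf_0$. This becomes negative for large $|s|$ whenever $\cbf^\top\tilde\xbf_0\neq 0$. So no normalization is needed, and even non-strict global passivity is ruled out.
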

\begin{proof}
The unforced model ($u = 0$) has 4 equilibrium points at
\begin{align*}
    \xbf^\star_1 &= \begin{bmatrix}
-0.75 &  -1.82  &  -0.74  & 0 & -0.075 &   3.39
\end{bmatrix}^\top, \\ 
\xbf^\star_2 &= \begin{bmatrix}
-0.75 &  -1.82 &  -0.23  & 0.11 & -0.075  &   3.39
\end{bmatrix}^\top, \\ 
\xbf^\star_3 &= \begin{bmatrix}
-0.75 &  -1.82  &  -0.39  & 0 & -0.075 &   3.39
\end{bmatrix}^\top, \\ 
\xbf^\star_4 &= \begin{bmatrix}
0.43 &  0.07  &  -1.28  & 0 & 0.043 &   8.12
\end{bmatrix}^\top,
\end{align*}
The linearization~\eqref{eq:lin} is nevertheless unstable at all of these equilibrium points, with 
\begin{align*}
\alpha(\Abf_1) = 0.1766,  \quad \alpha(\Abf_2) = 0.3698, \\ \alpha(\Abf_3) = 0.5416, \quad \alpha(\Abf_4) = 11.136, 
\end{align*}
where $\alpha(\cdot)$ denotes spectral abscissa and each $\Abf_i = \frac{\partial f}{\partial \xbf} (\xbf^\star_i)$. Therefore, the unforced nonlinear model is also unstable around all its equilibria~\cite [Thm 4.7]{Khalil} and, hence, cannot be strictly passive~\cite[Lem 6.7]{Khalil}.
\end{proof}

% In this section we start the analysis of the passivity of the Epileptor model with the standard choice of output in~\eqref{eq:y},~\eqref{eq:stdc1c2}. 

Despite Lemma~\ref{lem:nonpassive}, strict passivity is only a sufficient condition, and stabilization of Epileptor seizures is still possible via passive feedback. We thus next seek to characterize, both numerically and analytically, the conditions under which such stabilization can be achieved (Problem~\ref{prob}a).

% \subsection{Passive-shunt feedback under the standard output}

% Corollary~\ref{cor:std_not_passive} established that the
% standard LFP proxy output $y = x_1 - x_2$ violates
% the passivity matching condition and therefore the
% Epileptor linearization is not passive under the
% conventional input–output pairing. 
% Nevertheless, passivity is only a sufficient condition
% for stabilization. We now show that a passive-type
% feedback law can still locally stabilize the system.

% We continue to use the standard LFP proxy output
% \[
% y = x_1 - x_2 = c_{\mathrm{std}} \xbf,
% \qquad
% c_{\mathrm{std}} = \begin{bmatrix}1&0&0&-1&0&0\end{bmatrix}.
% \]
% We consider a
% \emph{passive-shunt} feedback law of the form used in our prior work,
% which implements dissipative output feedback without requiring the plant
% passivity.

Consider again the general feedback form~\eqref{eq:fbform} with $v = 0$ (cf. Section~\ref{sec:motiv_output} for nonzero $v$) and let, for simplicity,
\begin{align}\label{eq:linear-phi}
\phi(y) = k y, \qquad k \ge 0.
\end{align}
The input,
\begin{align}\label{eq:shunt_feedback}
u = u^\star - k(y - y^\star)
\end{align}
is then fully determined by the constant baseline input $u^\star$ determining system equilibria and feedback gain $k$. To understand the ability of this input to stabilize the system, we first numerically swept this two-dimensional parameter space and tracked the stability of the linearization in~\eqref{eq:lin}. 

The resulting spectral abscissa of closed-loop Jacobian matrix $\Abf - k \gbf \cbf^\top$ is shown in Fig.~\ref{fig:heatmaps}. Since the system is not strictly passive, there exist passive controllers that fail to stabilize it (the white region in the figure). Yet, `sufficiently strong' passive controllers can asymptotically stabilize the system, namely, when $u^\star$ is sufficiently negative and/or $k$ is sufficiently large.

\begin{figure}[t]
\centering
% \subfloat[Spectral Abscissa]{%
\includegraphics[width=0.5\linewidth]{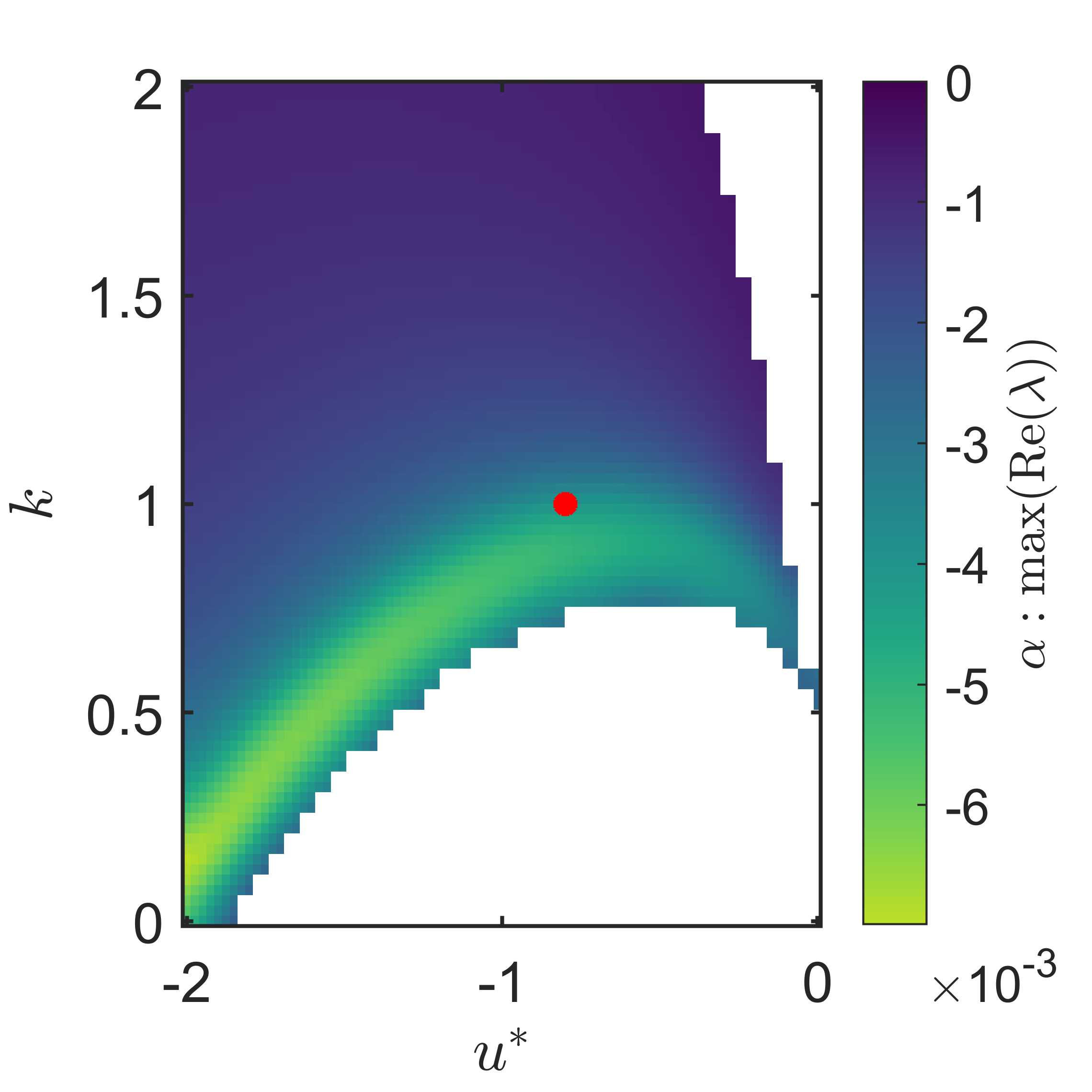}
% }
% \hfill
% \subfloat[Steady State]{%
% \includegraphics[width=0.48\linewidth]{figures/cmap_steady_1_minus_1.png}}
    \caption{The spectral abscissa of the closed-loop Jacobian matrix $\Abf - k \gbf \cbf^\top$ (cf.~\eqref{eq:lin}) as a function of the control parameters $(u^\star, k)$ of the passive feedback controller in~\eqref{eq:shunt_feedback}. Control parameters that fail to stabilize the system ($\alpha(\Abf) > 0$ are shown as white. The red dot corresponds to the nominal parameters $u^\star = -0.8, k = 1$ used in~\cite{acharya2026passive} and Theroem~\ref{thm:shunt_local_stab}.
    }
\label{fig:heatmaps}
\end{figure}

% \textbf{Shifted output.}
% Fix a constant reference $y_p\in\mathbb{R}$ and define the shifted output
% \begin{equation}
% \Delta y \coloneqq y - y_p.
% \label{eq:y_sh_def}
% \end{equation}
% In this paper, motivated by our previous work we take
% \begin{equation}
% y_p = -0.75.
% \label{eq:y0_value}
% \end{equation}

% \textbf{Passive-shunt feedback (PSF).}
% We apply the static output feedback law
% \begin{equation}
% u = u_p \coloneqq -k\, \Delta y
% = -k\,(y-y_p),
% \qquad k>0.
% \label{eq:shunt_feedback}
% \end{equation}
% We fix $k=1$ in the analysis below.

Motivated by this numerical observation, the following result proves the asymptotic stability of the full nonlinear model for a nominal combination of $u^\star$ and $k$. The same proof can be repeated, generally with different $V(\xbf)$, for other stabilizing combinations of $u^\star$ and $k$ as well.

\begin{theorem}\longthmtitle{Asymptotic stabilization under passive feedback}
\label{thm:shunt_local_stab}
Consider the Epileptor model~\eqref{eq:dyn}-\eqref{eq:stdc1c2} with output $y=x_1-x_2$
and feedback~\eqref{eq:shunt_feedback}. For
\begin{align}\label{eq:nominalparams}
u^\star = -0.8, \qquad \text{and} \qquad k = 1,
\end{align}
the closed-loop system is locally asymptotically stable towards the equilibrium point
\begin{align}\label{eq:xs_star_value}
\xbf^\star =
\begin{bmatrix}
-1.03 & -4.33  & -1.08 & 0 & -0.10 & 2.27
\end{bmatrix}^\top,
% \begin{bmatrix}
% -1.0329 \\
% -4.3349 \\
% 2.2682 \\
% -1.0829 \\
% 0 \\
% -0.1033
% \end{bmatrix},
\end{align}
with a region of attraction that includes (at least) the ball
\begin{align}\label{eq:ball}
\Bc = \big\{\xbf \in \real^n \ \big| \ \|\xbf - \xbf^\star\| < r_0 \big\},
\end{align}
with $r_0 = 0.56$.
\end{theorem}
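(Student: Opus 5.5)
The plan is a Lyapunov argument on the closed-loop vector field, carried out in shifted coordinates $\tilde\xbf = \xbf - \xbf^\star$.

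\emph{Step 1: fix the smooth branch.} Substituting $u = -0.8 - (x_1 - x_2 - y^\star)$ into~\eqref{eq:dyn} and solving the algebraic equilibrium equations gives $\xbf^\star$ in~\eqref{eq:xs_star_value}. For every $\xbf \in \Bc$ with $r_0 = 0.56$ we have $x_1 < -1.03 + 0.56 < 0$ and $x_2 < -1.08 + 0.56 < -0.25$. Hence on $\Bc$ the piecewise terms reduce to
\[
f_1 = x_1^3 - 3x_1^2, \qquad f_2 \equiv 0 .
\]
The closed loop is therefore a single polynomial vector field on $\Bc$. Its only nonlinearities are:
\begin{itemize}
\item the cubic $-(x_1^3 - 3x_1^2)$ in $\dot x_1$;
\item $-5x_1^2$ in $\dot y_1$;
\item $-x_2^3$ in $\dot x_2$.
\end{itemize}
In $\tilde\xbf$ coordinates these expand exactly as
\[
\dot{\tilde\xbf} = \Abf_{cl}\tilde\xbf + \gbf_{1}(\tilde x_1) + \gbf_{2}(\tilde x_2), \qquad \Abf_{cl} = \Abf - \gbf\cbf^\top ,
\]
where $\gbf_1$ collects the quadratic and cubic terms in $\tilde x_1$ (entering the $x_1$ and $y_1$ rows), and
\[
\gbf_2 = \big(0,\,0,\,-3x_2^\star\tilde x_2^2 - \tilde x_2^3,\,0,\,0,\,0\big)^\top .
\]
Both remainders are explicit polynomials in a single scalar coordinate. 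The $\tilde y_2$-dynamics decouple as $\dot{\tilde y}_2 = -\tilde y_2/\tau_2$, and $\tilde y_2$ feeds only into $\dot x_2$.

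\emph{Step 2: local stability.} I verify numerically that $\Abf_{cl}$ is Hurwitz, which is consistent with the red dot in Fig.~\ref{fig:heatmaps}. Local asymptotic stability then follows from~\cite[Thm 4.7]{Khalil}.

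\emph{Step 3: a quadratic Lyapunov function with an explicit bound.} To get the estimate $\Bc$, I take $V(\tilde\xbf) = \tilde\xbf^\top\Pbf\tilde\xbf$ with $\Pbf \succ 0$ solving
\[
\Abf_{cl}^\top\Pbf + \Pbf\Abf_{cl} = -\Qbf .
\]
Guided by the timescale structure, I would first try to choose $\Pbf$ sparse and block-diagonal-like (coupling only $(x_1,y_1)$, $(x_2,y_2)$, $\zeta$, $z$ plus the few needed cross terms), which gives the interpretable form promised in the contributions. The derivative is then
\[
\dot V = -\tilde\xbf^\top\Qbf\tilde\xbf + 2\tilde\xbf^\top\Pbf\big(\gbf_1 + \gbf_2\big).
\]
Since $|\tilde x_i| \le \|\tilde\xbf\|$, the nonlinear part is bounded by $2\|\Pbf\|\big(a_1\|\tilde\xbf\|^3 + a_2\|\tilde\xbf\|^4\big)$, with explicit coefficients $a_1, a_2$ read off from the polynomials (for example, $3|x_2^\star|$ and $1$ for the $x_2$ cubic, and similar for $x_1$). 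This yields
\[
\dot V \le -\|\tilde\xbf\|^2\Big(\lambda_{\min}(\Qbf) - 2\|\Pbf\|\big(a_1 r + a_2 r^2\big)\Big) \quad \text{on } \|\tilde\xbf\| \le r .
\]
So $\dot V < 0$ on $\{0 < \|\tilde\xbf\| < r_1\}$, where $r_1$ is the positive root of the bracket.

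\emph{Step 4: region of attraction.} Let $\Omega_c = \{V \le c\}$ with $c = \lambda_{\min}(\Pbf)\,\min(r_1, r_{\max})^2$, where $r_{\max}$ is the largest radius keeping the branch conditions of Step~1 valid. Then $\Omega_c$ is positively invariant and contained in the region where $\dot V < 0$, so it lies in the region of attraction. The ball $\Bc$ is contained in $\Omega_c$ provided
\[
\lambda_{\max}(\Pbf)\, r_0^2 \le c .
\]

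The main obstacle is Step~3–4. The crude norm bounds, together with the severe ill-conditioning of $\Pbf$ caused by $\tau_0 = 2857$ and $\gamma = 0.01$ (eigenvalues near $10^{-3}$–$10^{-4}$), may make $\lambda_{\max}/\lambda_{\min}$ far too large to certify $r_0 = 0.56$. My first attempt to fix this is:
\begin{itemize}
\item pick $\Qbf$ (equivalently, weight $\Pbf$) to balance the conditioning;
\item exploit that the nonlinearities depend only on $\tilde x_1$ and $\tilde x_2$, bounding them through the weighted quantities $|\tilde x_i| \le \sqrt{V/P'_{ii}}$, where $P'_{ii} = 1/(\Pbf^{-1})_{ii}$, instead of through $\|\tilde\xbf\|$;
\item if needed, check the resulting scalar polynomial inequality in $(\tilde x_1, \tilde x_2)$ exactly rather than by norm bounds.
\end{itemize}
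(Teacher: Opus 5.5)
Your Steps~1--2 are the paper's argument: verify the equilibrium, check that $\Abf-\gbf\cbf^\top$ is Hurwitz, and apply the indirect method. Your Steps~3--4 use the same quadratic sublevel-set template, with hand norm bounds instead of the paper's SOS program. The genuine gap is the radius $r_0=0.56$: you do not establish it, and your scheme cannot.

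Your inclusion test $\lambda_{\max}(\Pbf)r_0^2\le c$ is the right one. Note that the paper's proof gets this step backwards: $V\ge\lambda_{\min}(\Pbf)\|\tilde\xbf\|^2$ only shows that $\Omega_{\rho^\star}$ lies \emph{inside} the ball of radius $\sqrt{\rho^\star/\lambda_{\min}(\Pbf)}$. With the paper's $V$ (weight $848$ on $\tilde z^2$) and $\rho^\star\approx1.67\times10^{-3}$, the ball actually contained in $\Omega_{\rho^\star}$ has radius $\sqrt{\rho^\star/848}\approx1.4\times10^{-3}$. So there is no valid certificate of $0.56$ for you to reproduce.

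Your own Step~4 gives $r_0\le0.83/\sqrt{\kappa(\Pbf)}$, using $c\le\lambda_{\min}(\Pbf)\min(r_1,r_{\max})^2$ and $r_{\max}\le0.83$ (forced by $x_2<-0.25$). Every Lyapunov matrix of $\Abf_{cl}$ satisfies $\kappa(\Pbf)\ge\sup_t\|e^{\Abf_{cl}t}\|^2$, and the transient growth here is large. In the linearization, a unit perturbation of $\tilde z$ lets the fast block settle near $\tilde x_1\approx-1.6$, $\tilde y_1\approx-17$ while $\tilde z$ is still close to $1$. So $\kappa(\Pbf)$ is in the hundreds, and the crude route certifies only $r_0\lesssim0.06$.

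Your remedies (rebalancing $\Qbf$, or the weighted bounds $|\tilde x_i|\le\sqrt{V/P'_{ii}}$) do not fix this, because the obstruction is dynamical, not a conditioning artifact. Any positively invariant $\Omega_c\supseteq\Bc$ must contain the trajectory starting from $\tilde x_2=0.5$, $\tilde\zeta=0.25$, with all other deviations zero (norm $<0.56$). Along this trajectory, $\tilde y_2$ stays at $0$ while $x_2<-0.25$, and $z\approx z^\star$, so $\dot x_2\approx2x_2-x_2^3+1.40-\tilde x_1$. Since $2x_2-x_2^3\ge-1.09$ for $x_2\le0$, $x_2$ climbs past $-0.25$ within roughly one time unit. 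That is long before the slow $(x_1,y_1)$ mode can raise $\tilde x_1$ enough to stop it; $\tilde x_1$ only reaches about $0.1$--$0.2$ in that time.

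Hence the set you need leaves the smooth branch on which your Step~1 rests. A quadratic $V$ would also have to decrease along a large excursion of $x_2$ toward its right branch. This makes $0.56$ doubtful as a certifiable region-of-attraction radius, and possibly even as a true one. What your argument does deliver, once the constants are computed, is local asymptotic stability together with a certified ball of much smaller radius.
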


\begin{proof}
It is straightforward to check that $\xbf^\star$ in~\eqref{eq:xs_star_value} satisfies
\begin{align*}
    \zeros = f(\xbf^\star) + \gbf u^\star,
\end{align*}
and is thus an equilibrium.
%
% Consider the closed-loop system obtained by applying the PSF law
% $u_p = -(y-y_p)$
% to \eqref{eq:affine}.
% Any equilibrium of this system satisfies $0=f(\xbf)+g\,u_p(\xbf)$.
% Solving this equation yields the closed-loop equilibrium $\xbf_s^\star$
% given in \eqref{eq:xs_star_value}.
%
To assess the local stability of the nonlinear closed-loop system around $\xbf^\star$, note that its linearization is given by, in shifted coordinates~\eqref{eq:tildes}, 
% Define shifted coordinates $\tilde{\xbf}=\xbf-\xbf_s^\star$ and linearize the
% closed-loop dynamics about $\xbf_s^\star$ to obtain
\begin{align}\label{eq:linearized-cl}
\dot{\tilde \xbf} = \big(\Abf - \gbf \cbf^\top\big) \tilde \xbf,
\end{align}
% \dot{\tilde{\xbf}} = A_{\mathrm{cl}}\tilde{\xbf},\qquad
% A_{\mathrm{cl}}
% =
% \left.\frac{\partial}{\partial \xbf}\bigl(f(\xbf)+g\,u_p(\xbf)\bigr)\right|_{\xbf_s^\star}
% =
% A_s - g\,c_{\mathrm{std}},
where $\Abf=\frac{\partial f}{\partial \xbf}\Big|_{\xbf^\star}$ and
$\cbf$ is in~\eqref{eq:stdc},~\eqref{eq:stdc1c2}. %the output-gradient corresponding to $y=x_1-x_2$. 
The matrix $\Abf - \gbf \cbf^\top$ is Hurwitz (cf. Fig.~\ref{fig:heatmaps}), and thus linearized closed-loop system~\eqref{eq:linearized-cl} is asymptotically stable. 

Since $\Abf - \gbf \cbf^\top$ is Hurwitz, there exists a quadratic Lyapunov
function $V(\tilde{\xbf})=\tilde{\xbf}^\top \Pbf \tilde{\xbf}$ with
$\Pbf \succ \zeros$ such that
\begin{align}\label{eq:lyap-eq-cl}
\big(\Abf - \gbf \cbf^\top\big)^\top \Pbf + \Pbf \big(\Abf - \gbf \cbf^\top\big) \prec \zeros.
\end{align}
One such Lyapunov function among the infinitely many that satisfy~\eqref{eq:lyap-eq-cl}, is given by
\begin{align*}
V(\tilde \xbf) = 1.1 \tilde x_1^2 &+ 0.12 \tilde y_1^2 + 0.05 \tilde x_2^2 + 59.3 \tilde y_2^2 + 364 \tilde \zeta^2 
\\
&+ 848 \tilde z^2 + 0.14 \tilde y_1 \tilde x_2
\end{align*}
which is obtained by solving the
sparsity-promoting convex program
\begin{align*}
&\min_{\Pbf \succeq \delta \Ibf} && \sum_{i \neq j} |P_{ij}|
% \|\Pbf\|_{1,\mathrm{off}}
\\ 
&\ \ \text{s.t.} &&
\big(\Abf - \gbf \cbf^\top\big)^\top \Pbf + \Pbf \big(\Abf - \gbf \cbf^\top\big) \preceq -\varepsilon \Ibf,
\end{align*}
with $\varepsilon = 10^{-4}$ and $\delta = 10^{-6}$. 
% fixed $\varepsilon,\delta>0$. Here
% \[
% \|P\|_{1,\mathrm{off}} := \sum_{i\neq j} |P_{ij}|
% \]
% denotes the $\ell_1$ norm of the off-diagonal entries of $P$, which
% promotes sparsity.

% The resulting Lyapunov matrix $P$ is highly sparse. In the coordinates
% $\tilde{\xbf}=[\tilde{x}_1,\tilde{y}_1,\tilde{z},\tilde{x}_2,\tilde{y}_2,\tilde{\zeta}]^\top$,
% the only nonzero entries are the diagonal elements
% $P_{11},P_{22},P_{33},P_{44},P_{55},P_{66}$ and the symmetric coupling
% $P_{24}=P_{42}$; all other entries are zero. The corresponding polynomial
% parameterization of $V(\tilde{\xbf})$ is reported in
% Table~\ref{tab:lyap_params}.

% \begin{table}[t] \centering \renewcommand{\arraystretch}{1.2} \caption{Quadratic Lyapunov function parameterization $V(\tilde{x})=\tilde{x}^\top P \tilde{x}$ } \label{tab:lyap_params} \begin{tabular}{lcl} \hline Parameter & Monomial term in $V(\tilde{x})$ & Value \\ \hline $P_{11}$ & $\tilde{x}_1^2$ & $0.0112$ \\ $P_{22}$ & $\tilde{y}_1^2$ & $0.0012$ \\  $P_{33}$ & $\tilde{x}_2^2$ & $0.0005$ \\ $P_{44}$ & $\tilde{y}_2^2$ & $0.5934$ \\ $P_{55}$ & $\tilde{\zeta}^2$ & $3.6386$ \\$P_{66}$ & $\tilde{z}^2$ & $8.4781$ \\ $P_{23}=P_{32}$ & $2\,\tilde{y}_1 \tilde{x}_2$ & $0.0007$ \\ \hline \end{tabular} \end{table}

% \paragraph*{Nonlinear certificate via SOS.}

To obtain the region of attraction of $\xbf^\star$ in the full nonlinear closed-loop system, we solved the following convex sum of squares (SOS) program
\begin{subequations} \label{eq:sostools}
\begin{align}
    &\max && \rho
    \\
    &\ \ \text{s.t.} && V(\tilde \xbf) \le \rho \;\Rightarrow\;
    \\
    &&& \nabla V(\tilde \xbf)^\top \Big(f(\xbf^\star + \tilde \xbf) + \gbf u^\star - \gbf \cbf^\top \tilde \xbf\Big) \le -\epsilon \|\tilde \xbf\|^2
\end{align}
\end{subequations}
with $\epsilon = 10^{-6}$ using~\cite{Prajna2002SOSTOOLS}, resulting in the bound
$
    \rho^\star = 1.66561 \times 10^{-3}.
$
Since $V(\xbf) \ge \lambda_{\min}(\Pbf) \|\tilde \xbf\|^2$, it follows that the sublevel set
\begin{align*}
\Omega_{\rho^\star} = \big\{\xbf \in \real^n \ \big| \ V(\tilde \xbf) \le \rho^\star \big\}
\end{align*}
contains the Euclidean ball~\eqref{eq:ball} with radius
\begin{align*}
r_0 = \sqrt{\rho^\star / \lambda_{\min}(\Pbf)} \simeq 0.56,
\end{align*}
completing the proof.
\end{proof}

The simulated trajectories of the closed-loop system at the nominal feedback parameters~\eqref{eq:nominalparams} are shown in Fig.~\ref{fig:sims}. The same proof as in Theorem~\ref{thm:shunt_local_stab} can also be stated, albeit with different $V(\xbf)$, for other stabilizing combinations of $u^\star$ and $k$ in Fig.~\ref{fig:heatmaps}.

% \begin{remark}
%  Although the standard sensing
% configuration violates the Positive Real Lemma,
% static output feedback of passive form still achieves
% local stabilization (illustrated in Fig.~\ref{fig:sims}A-B).
% \end{remark}

\begin{figure}[t]
\centering
\subfloat[Simulated trajectory]{%
\includegraphics[width=0.45\linewidth]{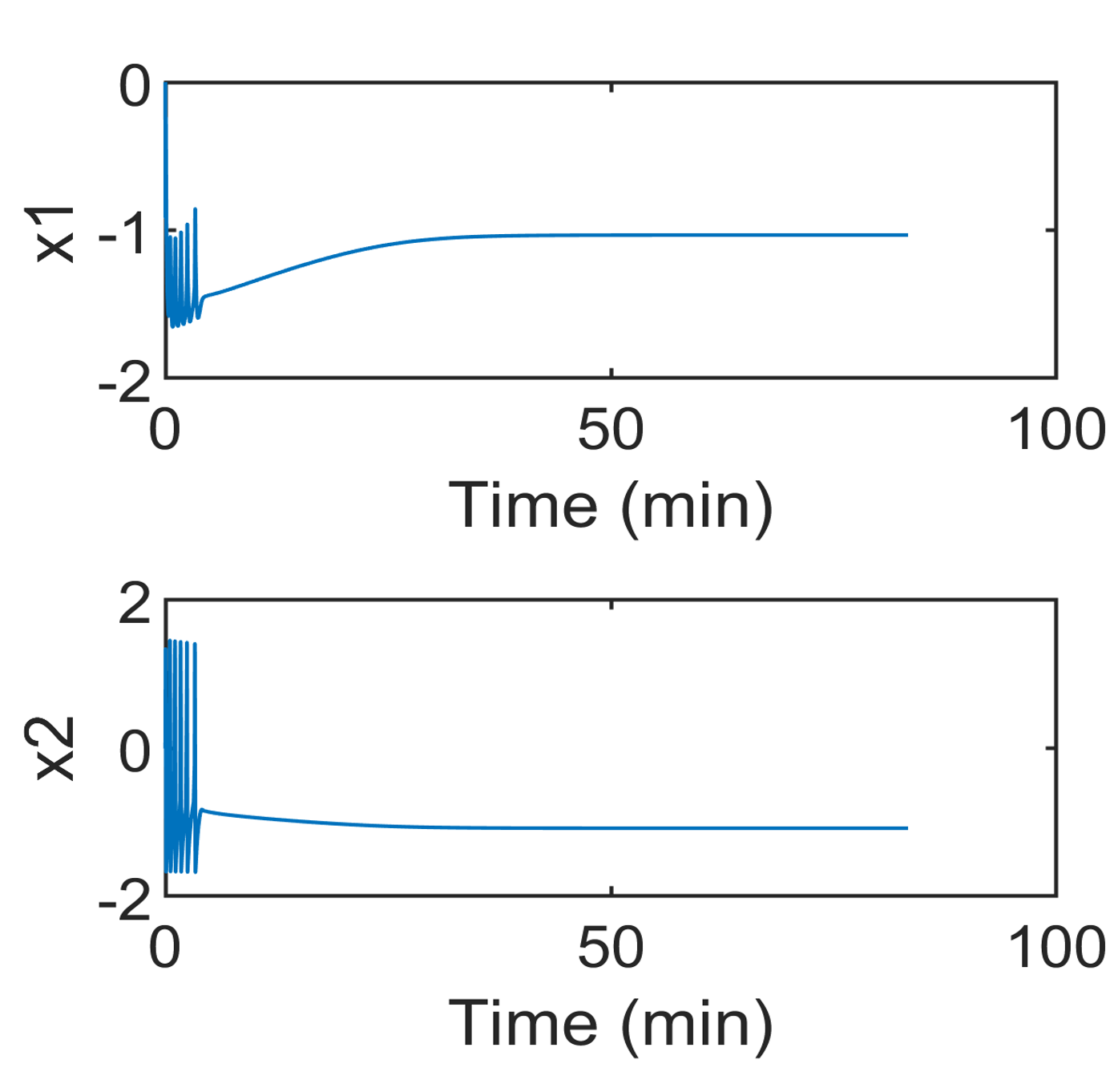}}
\hfill
\subfloat[Phase plane $(x_2,y_2)$]{%
\includegraphics[width=0.53\linewidth]{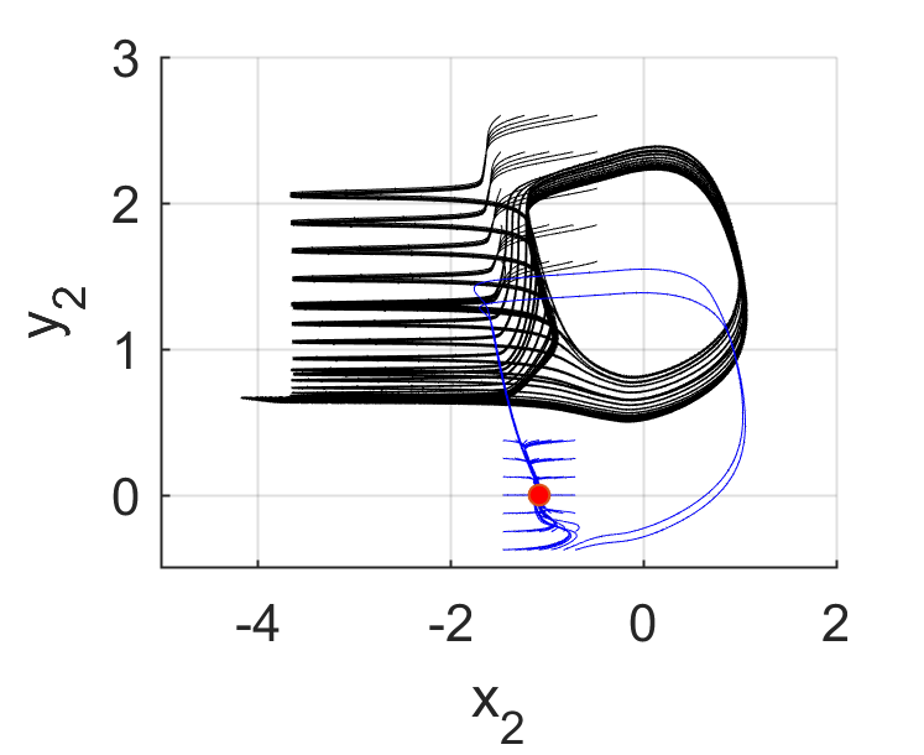}}
\caption{Closed-loop behavior of the Epileptor under passive feedback with parameters as in Theorem~\ref{thm:shunt_local_stab}. Left: asymptotic convergence of two representative states to steady-state values. Right: the phase plane of the slow subsystem, showing the co-existence of the asymptotically stable equilibrium point (red dot) and a stable limit cycle (black). The apparent overlap between the regions of attraction of the stable equilibrium and the limit cycle is due to the 2D projection of the 6D state space.}
\label{fig:sims}
\end{figure}

\begin{remark}\longthmtitle{Lack of global stability}
The feedback stabilization achieved by~\eqref{eq:fbform}-\eqref{eq:nominalparams} is only local. Due to the intrinsic
slow–fast bistability of the Epileptor,
the closed-loop system retains a coexisting
stable limit cycle that is attractive for certain initial conditions.
Fig.~\ref{fig:sims}(b)
illustrates this coexistence of a stable equilibrium and a stable limit cycle. \oprocend %trajectories converging either to
% $x^\star$ or to a persistent oscillatory attractor,
% demonstrating that seizure oscillations are not
% globally eliminated.
\end{remark}

Given the ability of passive feedback to stabilize the system, one might hypothesize that the resulting closed-loop system~\eqref{eq:dyn}-\eqref{eq:stdc1c2},~\eqref{eq:fbform} also becomes (strictly) passive with respect to the incremental input $v$. This is not the case, however, as discussed next.

%%%%%%%%%%%%%%%%%%%%%%%%%%%%%%%%%%%%%%%%%%%%%%%%%%%%%%%%%%%%%%%%%%%%%%%%%%%%%%%%
\section{Passivation of the Epileptor via Output Redesign}
\label{sec:motiv_output}

\subsection{Passivation Under the Standard Output}\label{subsec:passivation-std}

In this section, we tackle the more ambitious objective of Problem~\ref{prob}b, seeking to not only stabilize but also passivate the Epileptor via feedback.
As seen both from~\eqref{eq:matching}-\eqref{eq:dissipation} in the nonlinear case and~\eqref{eq:PRLb}-\eqref{eq:PRLc} for linear systems, passivity of input-affine systems fundamentally consists of an internal energy dissipation and an input-output matching condition. The following result highlights a necessary structural restriction that the latter imposes on the Epileptor.

% The following result isolates a necessary structural condition for passivity imposed by input--output matching.

\begin{theorem}\longthmtitle{Matching obstruction for passivity of the standard Epileptor model}\label{thm:matching_obstruction}
Assume that the Epileptor model in~\eqref{eq:dyn}-\eqref{eq:y} is passive around an equilibrium point $\xbf^\star$ with a twice continuously-differentiable storage function $V(\xbf)$. Then we must have
\begin{align}\label{eq:c1c2pos}
    c_1 + c_3 > 0.
\end{align}
% with $c\neq 0$ and
% \begin{equation}
% g =
% \begin{bmatrix}
% g_1 & 0 & 0 & g_4 & 0 & 0
% \end{bmatrix}^\top,
% \qquad (g_1,g_4)\neq (0,0).
% \label{eq:g_structure}
% \end{equation}
% If \eqref{eq:lin} is passive, then necessarily
% \begin{equation}
% g_1 c_1 + g_4 c_4 > 0,
% \label{eq:gc_positive}
% \end{equation}
% where $c_i$ denotes the $i$th entry of $c$.
\end{theorem}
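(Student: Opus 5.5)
The plan is to linearize the matching condition~\eqref{eq:matching} at $\xbf^\star$ and combine it with the positive semidefiniteness of the Hessian of $V$ there. Work in the shifted coordinates~\eqref{eq:tildes}, so that the output is $\tilde y = \cbf^\top \tilde\xbf$ with $\cbf$ as in~\eqref{eq:stdc}. The input vector is $\gbf$ from~\eqref{eq:g}. As is standard, normalize the storage function so that $V(\xbf^\star)=0$; a constant shift does not affect~\eqref{eq:passivity_def}.

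\emph{Step 1 (the minimum of $V$).} Since $V \ge 0$ and $V(\xbf^\star)=0$, the point $\xbf^\star$ is a minimizer of $V$. Because $V$ is $C^2$, this gives $\nabla V(\xbf^\star)=\zeros$ and $\Pbf := \nabla^2 V(\xbf^\star) \succeq \zeros$.

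\emph{Step 2 (linearized matching).} By the argument preceding~\eqref{eq:matching}, passivity forces $\nabla V(\xbf)^\top \gbf = \cbf^\top(\xbf-\xbf^\star)$ on a neighborhood of $\xbf^\star$. In the local case this holds because the inequality must hold for all $u$ in a neighborhood of $u^\star$, and the left-hand side is affine in $u$. Taylor-expanding the gradient with $C^2$ regularity gives
\begin{align*}
\nabla V(\xbf) = \Pbf(\xbf-\xbf^\star) + o(\|\xbf-\xbf^\star\|).
\end{align*}
Substituting and comparing first-order terms yields $\gbf^\top \Pbf = \cbf^\top$, that is, $\Pbf\gbf = \cbf$. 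This recovers the condition~\eqref{eq:PRLc} of Proposition~\ref{prop:prl} for the linearization.

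\emph{Step 3 (sign of $c_1+c_3$).} Multiply $\Pbf\gbf=\cbf$ on the left by $\gbf^\top$. This gives $c_1+c_3 = \gbf^\top\cbf = \gbf^\top \Pbf \gbf \ge 0$ by Step 1. To make the inequality strict, use the fact that for $\Pbf\succeq\zeros$ the equality $\gbf^\top\Pbf\gbf=0$ implies $\Pbf^{1/2}\gbf=\zeros$, hence $\Pbf\gbf=\zeros$ and therefore $\cbf=\zeros$. That case is a trivial, identically zero output, which is excluded for the model~\eqref{eq:y}. Thus $c_1+c_3>0$.

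The main obstacle is the strictness in Step 3: it relies on implicitly excluding the degenerate output $\cbf=\zeros$, so I would state that assumption explicitly. A smaller point is justifying $V(\xbf^\star)=0$ and the local version of~\eqref{eq:matching}, which I would handle by the normalization above and by the affine-in-$u$ argument. As a consequence, the standard choice~\eqref{eq:stdc1c2} gives $c_1+c_3=0$ and so violates the necessary condition.
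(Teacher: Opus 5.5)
Your proposal follows the paper's proof step for step: expand $V$ at $\xbf^\star$ with $V(\xbf^\star)=0$, $\nabla V(\xbf^\star)=\zeros$ and $\Pbf=\nabla^2 V(\xbf^\star)\succeq\zeros$; equate first-order terms in~\eqref{eq:matching} to get $\Pbf\gbf=\cbf$; then conclude $c_1+c_3=\gbf^\top\Pbf\gbf>0$, since $\gbf^\top\Pbf\gbf=0$ would force $\cbf=\zeros$ (your $o(\|\tilde\xbf\|)$ remainder is in fact the appropriate one for a $C^2$ function). Two side justifications are off, though harmless in the paper's setting: $V(\xbf^\star)=0$ must be taken as part of positive semidefiniteness about $\xbf^\star$ (as the paper does), not obtained by subtracting a constant, which can destroy $V\ge 0$; and an affine function of $u$ that is nonpositive on an interval need not have zero slope, so your local-in-$u$ remark also needs the fact that the $u$-independent term is $\Oc(\|\tilde\xbf\|^2)$, which still yields $\Pbf\gbf=\cbf$.
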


\begin{proof}
Since $V(\xbf)$ is twice continuously differentiable around $\xbf^\star$, it can be expanded as a Taylor series
\begin{align*}
V(\xbf) &= V(\xbf^\star) + \nabla V(\xbf^\star)^\top \tilde \xbf + \frac{1}{2} \tilde \xbf^\top \Pbf \tilde \xbf + R(\tilde \xbf)
\end{align*}
where $R(\tilde \xbf) = \Oc(\|\tilde \xbf\|^3)$, $\tilde \xbf = \xbf - \xbf^\star$ as in~\eqref{eq:equilibrium}, and $\Pbf = \nabla^2 V(\xbf^\star)$. Since $V(\xbf)$ is positive semidefinite, $V(\xbf^\star) = 0$, $\nabla V(\xbf^\star) = \zeros$, and $\Pbf \succeq \zeros$. Thus, $\nabla V$ simplifies to
\begin{align*}
\nabla V(\xbf) &= \Pbf \tilde \xbf + \nabla R(\tilde \xbf).
% \\
% \nabla R(\tilde \xbf) &= \Oc(\|\tilde \xbf\|^2).
\end{align*}
Substituting this into~\eqref{eq:matching} we obtain
\begin{align*}
\tilde \xbf^\top \Pbf \gbf + \nabla R(\tilde \xbf)^\top \gbf = \tilde \xbf^\top \cbf
\end{align*}
where $\cbf$ is as in~\eqref{eq:stdc}. Since this must hold for all $\xbf$ in a neighborhood of $\xbf^\star$ and $\nabla R(\tilde \xbf)^\top \gbf = \Oc(\|\tilde \xbf\|^2)$, we must have $\nabla R(\tilde \xbf)^\top \gbf = 0$ and equating first-order terms yields
% By Proposition~1 (Positive Real Lemma), passivity of \eqref{eq:lin} implies the existence of a symmetric matrix $P=P^\top\succeq 0$ such that
\begin{equation}\label{eq:PRL_match_here}
\Pbf \gbf = \cbf.
\end{equation}
Left-multiplying~\eqref{eq:PRL_match_here} by $\gbf^\top$ gives
\begin{equation}\label{eq:cg_equals_quad}
\gbf^\top \Pbf \gbf = \gbf^\top \cbf.
\end{equation}
Since $\Pbf \succeq \zeros$, we have $\gbf^\top \Pbf \gbf \ge 0$. Moreover, $\gbf^\top \Pbf \gbf = 0$ implies $\Pbf \gbf = 0$ which, by~\eqref{eq:PRL_match_here}, forces $\cbf = \zeros$, contradicting~\eqref{eq:stdc}. Therefore, $\gbf^\top \Pbf \gbf > 0$, and thus
\begin{equation}
\gbf^\top \cbf = c_1 + c_3 > 0,
\end{equation}
completing the proof.
\end{proof}

Theorem~\ref{thm:matching_obstruction} identifies a necessary structural condition for passivating the Epileptor seizure dynamics. It can be seen from the proof that the condition~\eqref{eq:c1c2pos} depends only on $c_1$ and $c_3$ \textit{not} because of the sparse choice of the output~\eqref{eq:y}, but rather because of the sparse nature of $\gbf$. Importantly, \eqref{eq:c1c2pos} would still remain necessary if other (linear) terms are added to the output (cf. Section~\ref{subsec:opt}).

% \erfan{We can put a remark or paragraph here emphasizing that even if we had other linear terms in the output, only $c_1 + c_3$ would matter because of the form of $\gbf$.}

\begin{corollary}\longthmtitle{Matching obstruction for feedback passivation of the standard Epileptor}
\label{cor:std_not_passive}
Consider the standard Epileptor model~\eqref{eq:dyn}-\eqref{eq:stdc1c2} under the feedback input~\eqref{eq:fbform}. The resulting closed-loop system (with $v$ as input) cannot be passive for any choice of $\phi$.
% Let $g = [1,0,0,1,0,0]^\top$ and $c=c_{\mathrm{std}}=[1,0,0,-1,0,0]$. Then the matching condition \eqref{eq:PRL_match_here} cannot hold, and the linearization \eqref{eq:lin} is not passive. Consequently, the Epileptor model is not locally passive about $\xbf^\star$ under the standard LFP proxy output \eqref{eq:ystd}.
\end{corollary}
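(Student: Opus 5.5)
The plan is to reduce the corollary directly to Theorem~\ref{thm:matching_obstruction}. The key observation is that the feedback~\eqref{eq:fbform} changes only the drift of the system, not its input vector field or its output. Substituting $u = u^\star - \phi(y-y^\star) + v$ into~\eqref{eq:affine}, the closed-loop system reads
\begin{align*}
\dot\xbf = \underbrace{f(\xbf) + \gbf u^\star - \gbf\,\phi(\cbf^\top\xbf - y^\star)}_{=: f_{\rm cl}(\xbf)} + \gbf v, \qquad y = \cbf^\top \xbf.
\end{align*}
This is again control-affine in the new input $v$, with the same input vector $\gbf$ as in~\eqref{eq:g} and the same linear output~\eqref{eq:y}--\eqref{eq:stdc1c2}. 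All dependence on $\phi$ is absorbed into the new drift $f_{\rm cl}$.

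Next, I will note that the proof of Theorem~\ref{thm:matching_obstruction} never uses the specific form of the drift $f$. It only uses three ingredients:
\begin{itemize}
\item[(i)] the matching condition~\eqref{eq:matching}, which is necessary for passivity of any control-affine system regardless of its drift, as derived in Section~\ref{subsec:prelim-passivity};
\item[(ii)] the positive semidefiniteness and twice continuous differentiability of $V$ around the equilibrium;
\item[(iii)] the structure of $\gbf$ and $\cbf$.
\end{itemize}
Suppose the closed-loop system were passive with respect to $(v,\tilde y)$ around some closed-loop equilibrium $\xbf^\star$, with a $C^2$ storage function. The same Taylor-expansion argument then yields $\Pbf\gbf = \cbf$ with $\Pbf = \nabla^2 V(\xbf^\star)\succeq\zeros$. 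Multiplying by $\gbf^\top$ gives $\gbf^\top\cbf = c_1 + c_3 > 0$, exactly as in that proof. With the standard choice~\eqref{eq:stdc1c2}, however, $c_1 + c_3 = 1 - 1 = 0$, which is a contradiction. Hence no $\phi$ can render the closed loop passive.

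The main subtlety, rather than a genuine obstacle, concerns the smoothness and equilibrium assumptions. Theorem~\ref{thm:matching_obstruction} is stated for $C^2$ storage functions about an equilibrium, so the corollary should be read in that same sense. It is also worth checking that the positive semidefiniteness argument giving $\nabla V(\xbf^\star)=\zeros$ does not depend on $\phi$; it does not, since it concerns only $V$.

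A more direct route also exists and avoids any smoothness of $V$ beyond $C^1$. Matching alone requires $\nabla V(\xbf)^\top\gbf = x_1 - x_2 - y^\star$ near $\xbf^\star$. Consider the curve $\xbf^\star + s\gbf$, along which $x_1 - x_2$ is constant. On this curve
\begin{align*}
\frac{d}{ds} V(\xbf^\star + s\gbf) = \nabla V^\top \gbf = \tilde y = 0,
\end{align*}
so $V$ is constant along $\gbf$ through $\xbf^\star$. This alone is not contradictory, so I will rely on the second-order argument inherited from Theorem~\ref{thm:matching_obstruction}.
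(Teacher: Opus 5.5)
Your proposal is correct and follows the paper's proof: the feedback~\eqref{eq:fbform} only changes the drift, so the closed loop keeps the same $\gbf$ and $\cbf$, and the necessary condition $c_1+c_3>0$ of Theorem~\ref{thm:matching_obstruction} still applies and fails for $c_1=-c_3=1$. As a side note, the $C^1$ route you abandoned does close under the paper's global definition of passivity if you take the $\gbf$-line through a point $\xbf_0$ with $\cbf^\top(\xbf_0-\xbf^\star)\neq 0$ rather than through $\xbf^\star$: since $\cbf^\top\gbf=0$, matching makes $s\mapsto V(\xbf_0+s\gbf)$ affine with nonzero slope $\cbf^\top(\xbf_0-\xbf^\star)$, so $V$ cannot stay nonnegative, and no $C^2$ assumption is needed.
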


\begin{proof}
From~\eqref{eq:affine}, \eqref{eq:fbform}, and~\eqref{eq:y} the resulting closed-loop system still has the input-affine form
\begin{align*}
\dot \xbf &= \hat f(\xbf) + \gbf v
\\
y &= \cbf^\top \xbf
\end{align*}
where $\hat f(\xbf) = f(\xbf) + \gbf u^\star - \gbf \phi(\cbf^\top \xbf - y^\star)$. Therefore, the closed-loop system must still satisfy the necessary condition of Theorem~\ref{thm:matching_obstruction} to be passive, which is violated by the standard output~\eqref{eq:stdc1c2}. 
% and is thus subject to the same necessary condition
% Here $c g = 1\cdot 1 + 1\cdot (-1)=0$, violating the necessary condition \eqref{eq:gc_positive}. Thus no $P\succeq 0$ can satisfy $Pg=c^\top$, and passivity fails by the Positive Real Lemma.
\end{proof}

As noted earlier, the passivity of input-affine systems decouples into an internal dissipation and an input-output matching. The above results show that, while the Epileptor can be robustly stabilized by passive feedback, it cannot be made passive due to a failure of the input-output matching condition.

% This obstruction is purely algebraic and holds regardless of the linearized drift matrix $A$.

% In Section~\ref{sec:passivity_standard}, we showed that the standard LFP proxy output
% $y = x_1 - x_2$ fails the algebraic matching condition required
% by the Positive Real Lemma. In particular, Theorem~1 showed
% that for scalar outputs $y = c x$, passivity of the linearization
% requires \eqref{eq:gc_positive}. For the standard choice $c = [1,0,0,-1,0,0]$, this inequality 
% fails, creating a structural obstruction to passivity.

% We now investigate whether appropriate output redesign can restore passivity and thereby enable passive stabilization.

\subsection{Passivation via Output Redesign}\label{subsec:yc1pc2}

To address the matching obstruction shown in Section~\ref{subsec:passivation-std}, we need to change the output such that the condition~\eqref{eq:c1c2pos} is satisfied. A rather straightforward and interpretable
%
% For the Epileptor input configuration
% $g_1 = g_4 = 1$, \eqref{eq:gc_positive} becomes
% \begin{equation}
% c_1 + c_4 > 0.
% \label{eq:nominal_matching}
% \end{equation}
%
% A particularly symmetric and physically interpretable
choice is
\begin{equation}\label{eq:symmetric_choice}
c_1 = c_3 = 1,
\end{equation}
corresponding to sensing the sum of the fast and the slow
subsystems (cf. Section~\ref{sec:prelim_ep}). The following result proves the passivity of the resulting closed-loop system at a nominal setting of feedback parameters.

% We now analyze whether this output restores passivity.

% \subsection{Passivity Under Symmetric Output}

% As discussed in Section~\ref{sec:prelim_ep}, the autonomous
% system ($u=0$) does not admit an asymptotically stable equilibrium
% in the parameter regime considered here.
% To enable equilibrium stabilization, we therefore introduce
% a constant bias input $u_e$ and study the closed-loop
% dynamics about the corresponding forced equilibrium.

% For sufficiently negative constant inputs,
% the vector field $f(\x)+g u_e$ admits a stable equilibrium.
% Numerical calculation reveals that for
% \[
% u_e \coloneqq u^\star_e = -2,
% \]
% the system possesses a single stable equilibrium
% \[
% \x^\star_e =
% \begin{bmatrix}
% -1.3706 \\
% -8.3926 \\
% 0.9176 \\
% -1.3362 \\
% 0 \\
% -0.1371
% \end{bmatrix},
% \]

% Let us
% consider the linearization about ($\x_e^\star$,$u_e^\star$) in~\eqref{eq:lin} with $A\coloneqq A_e=\left.\frac{\partial f}{\partial \xbf}\right|_{\xbf_e^\star}$ and
%  $c\coloneqq c_{new} = [1,0,0,1,0,0]$, motivated from our discussion in \ref{subsec:motiv_output}. 

\begin{theorem}\longthmtitle{Strict passivity following output redesign}\label{thm:passive-newc1c2}
Consider the Epileptor model~\eqref{eq:dyn}-\eqref{eq:y} with the choice of output in~\eqref{eq:symmetric_choice}. Let $u$ be given by the feedback form~\eqref{eq:fbform},~\eqref{eq:linear-phi} and
\begin{align}\label{eq:nominal-with-k-0}
u^\star = -2, \qquad \text{and} \qquad k = 0.
\end{align}
Then, the resulting closed-loop system is locally strictly passive about $\xbf^\star$; in particular, the passivity inequality holds for all $u \in \mathbb{R}$ and all $\xbf$ in the ball

\begin{align*}
\Bc := \big\{\xbf \in \mathbb{R}^n \ \big| \ \|\xbf-\xbf^\star\|_2 \le r_1\big\},
\end{align*} 
around the equilibrium point
\begin{align}\label{eq:xstar-um2}
\xbf^\star =
\begin{bmatrix}
-1.37 & -8.39 & -1.34 & 0 & -0.14 & 0.92
\end{bmatrix}^\top,
\end{align}
with radius $r_1 = 1.008$.
\end{theorem}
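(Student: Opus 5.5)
We follow the same template as the proof of Theorem~\ref{thm:shunt_local_stab}, but now certify a storage function that satisfies both the matching and the dissipation conditions.

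\textbf{Step 1: the equilibrium.} With $k=0$ the input is $u = u^\star + v$, so the closed loop is simply~\eqref{eq:affine} with constant bias $u^\star=-2$ and input $v$. We check directly that $\xbf^\star$ in~\eqref{eq:xstar-um2} satisfies $\zeros = f(\xbf^\star) + \gbf u^\star$. At this point $x_1^\star<0$ and $x_2^\star<-0.25$, so both piecewise functions sit on smooth branches: $f_1 = x_1^3-3x_1^2$ and $f_2=0$. Because $|x_1^\star|=1.37>r_1$ and $|x_2^\star+0.25|=1.09>r_1$, these branches remain active on the whole ball $\Bc$. Hence $f$ is a polynomial on $\Bc$, which is what makes an SOS certificate applicable.

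\textbf{Step 2: linear storage function.} We linearize to obtain $\Abf=\partial f/\partial\xbf(\xbf^\star)$, with input $\gbf$ and output vector $\cbf=[1\ 0\ 1\ 0\ 0\ 0]^\top$. Note that $\gbf^\top\cbf=2>0$, so the obstruction of Theorem~\ref{thm:matching_obstruction} is absent. We then solve the LMI feasibility problem
\begin{align*}
\Pbf\succeq\delta\Ibf,\qquad \Abf^\top\Pbf+\Pbf\Abf\preceq-\varepsilon\Ibf,\qquad \Pbf\gbf=\cbf,
\end{align*}
optionally with the sparsity-promoting objective $\sum_{i\neq j}|P_{ij}|$. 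Any feasible $\Pbf$ certifies strict passivity of the linearization by Proposition~\ref{prop:prl}. Proposition~\ref{prop:local-suff} then already yields local strict passivity of the nonlinear closed loop on some neighborhood.

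\textbf{Step 3: the explicit radius.} To obtain the quantitative ball, take $V(\tilde\xbf)=\tfrac12\tilde\xbf^\top\Pbf\tilde\xbf$. Because $\gbf$ is constant, $\nabla V^\top\gbf=\tilde\xbf^\top\Pbf\gbf=\cbf^\top\tilde\xbf=\tilde y$ holds exactly, for all $\xbf$. Matching~\eqref{eq:matching} is therefore global, and the inequality $\dot V\le \tilde v\tilde y-\kappa$ holds for all $v\in\real$. What remains is the dissipation inequality
\begin{align*}
\tilde\xbf^\top\Pbf\big(f(\xbf^\star+\tilde\xbf)+\gbf u^\star\big)\le-\epsilon\|\tilde\xbf\|^2\quad\text{for }\|\tilde\xbf\|\le r_1 .
\end{align*}
We write this as an SOS program in the style of~\eqref{eq:sostools}, with an S-procedure multiplier for the constraint $r^2-\|\tilde\xbf\|^2\ge0$, and maximize $r$ by bisection to reach $r_1\approx1.008$.

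\textbf{Main obstacle.} The hardest part is this last step. The residual $f(\xbf^\star+\tilde\xbf)+\gbf u^\star-\Abf\tilde\xbf$ contains the cubic $\tilde x_1^3$ and $\tilde x_2^3$ terms and the bilinear $\tilde x_2\tilde x_1$ terms. These must be dominated by $\tilde\xbf^\top(\Abf^\top\Pbf+\Pbf\Abf)\tilde\xbf$, and the equality constraint $\Pbf\gbf=\cbf$ restricts the freedom to shape $\Pbf$ for that purpose. If the SOS program proves conservative, the first thing to try is re-choosing $\Pbf$ within the affine feasible set of Step 2 so as to maximize $\lambda_{\min}\big(-(\Abf^\top\Pbf+\Pbf\Abf)\big)$ relative to the cubic bounds. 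Failing that, a fallback is to bound the cubic terms by hand using $|\tilde x_i|\le r$.
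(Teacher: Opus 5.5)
Your Steps 1--2 are the paper's proof. The paper exhibits the diagonal certificate $\Pbf=\operatorname{diag}(1,0.074,1,125,143,600)$, which satisfies $\Pbf\gbf=\cbf$ and $\Abf^\top\Pbf+\Pbf\Abf\prec\zeros$, and then invokes Propositions~\ref{prop:prl} and~\ref{prop:local-suff}. Two of your remarks are correct and tighten that part: the smooth branches stay active on the ball, and matching holds exactly for a quadratic $V$.

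For $r_1$, the paper uses the sublevel-set SOS analysis of Theorem~\ref{thm:shunt_local_stab} and then sets $r=\sqrt{\rho^\star/\lambda_{\min}(\Pbf)}$. Since $V\ge\lambda_{\min}(\Pbf)\|\tilde\xbf\|^2$, that ball contains $\{V\le\rho^\star\}$ rather than lying inside it. Your direct S-procedure on the ball is therefore the right formulation of the claim.

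The gap is in Step 3: you take for granted that the bisection will ``reach $r_1\approx1.008$''. It will not for the paper's $\Pbf$, nor for any $\Pbf$ without entries coupling $(x_1,y_1)$ to $(\zeta,z)$, which is what your sparsity objective favors. To see why, freeze $z=z^\star$. Besides $x_1^\star\approx-1.3706$, the fast pair then has a saddle at $x_1^s\approx-1.2954$, $y_1^s=1-5(x_1^s)^2\approx-7.390$, on the other side of the fold at $x_1=-4/3$. The same fold makes the fast Jacobian block nearly singular: its determinant is $3(x_1^\star)^2+4x_1^\star\approx0.15$, which is why the linear margin is so thin.

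Take $\xbf_s=(x_1^s,y_1^s,x_2^\star,y_2^\star,\zeta^\star,z^\star)$, with $\|\xbf_s-\xbf^\star\|\approx1.0054<1.008$. Every component of $f(\xbf_s)+\gbf u^\star$ vanishes except $\dot\zeta$ and $\dot z$, while $\tilde\zeta=\tilde z=0$. Hence $\tilde\xbf^\top\Pbf(f+\gbf u^\star)=0$ there, and strict dissipation fails inside the stated ball.

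For the paper's $\Pbf$ it is worse. At $\tilde\xbf\approx(0.0675,\,0.9,\,0.002,\,0,\,0.0048,\,-0.0272)$, of norm $\approx0.90$, one finds $\tilde\xbf^\top\Pbf(f+\gbf u^\star)\approx+10^{-4}>0$. The largest admissible ball for this $\Pbf$ is only roughly $0.76$.

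You have also identified the wrong terms as the obstacle:
\begin{itemize}
\item On the active branch $f_1$ does not depend on $x_2$, so there is no $\tilde x_1\tilde x_2$ term.
\item The cubic terms of $f$ contribute the dissipative $-\tilde x_1^4-\tilde x_2^4$.
\item The actual culprits are the quadratic terms $3(1-x_1^\star)\tilde x_1^2\approx7.11\tilde x_1^2$ in $\dot x_1$ and $-5\tilde x_1^2$ in $\dot y_1$. They produce sign-indefinite cubic terms in $\dot V$ along the weakly damped direction $(\tilde x_1,\tilde y_1)\propto(0.075,1)$.
\item This is aggravated by the $x_1$--$z$ cross term $P_{11}A_{16}+P_{66}A_{61}=-1+600\cdot 4/2857\approx-0.16$, where $A_{ij}$ are entries of $\Abf$.
\end{itemize}
So Step 3 should report whatever radius the SOS program actually certifies, or search over non-sparse $\Pbf$. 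Nothing in your plan, or in the paper's, guarantees $1.008$.
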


\begin{proof}
It is straightforward to see that~\eqref{eq:PRLa}-\eqref{eq:PRLc} are satisfied strictly for the linearized system~\eqref{eq:lin} and the diagonal matrix
\begin{align*} 
\Pbf = \operatorname{diag}\big(
\begin{bmatrix} 
1 & 0.074 & 1 & 125 & 143 & 600
\end{bmatrix}\big).
\end{align*}
Therefore, the linearized system~\eqref{eq:lin} is strictly passive by Proposition~\ref{prop:prl}, and the local strict passivity of the nonlinear system follows from Proposition~\ref{prop:local-suff}. The radius $r_1 = 1.008$ is obtained via the same sum of squares (SOS) Lyapunov analysis as in the proof of Theorem~\ref{thm:shunt_local_stab}. %bounding the higher-order nonlinear terms in the passivity inequality and computing, via SOS certification, the largest Euclidean ball centered at $\xbf^\star$ within which the strict passivity condition holds.
% 
% The following $P$ matrix was indentified numerically using YALMIP and satisfies~\eqref{eq:pass_out_select}
\end{proof}

The nominal values of the feedback parameters in~\eqref{eq:nominal-with-k-0} were chosen for simplicity based on Fig.~\ref{fig:heatmaps}, where $u^\star = -2$ is sufficient to stabilize the system without further feedback ($k = 0$). Nevertheless, the same numerical sweep as in Fig.~\ref{fig:heatmaps} can be performed for the new choice of output~\eqref{eq:symmetric_choice}, the results of which are shown in Fig.~\ref{fig:heatmaps-new}a. Note that, compared to Fig.~\ref{fig:heatmaps}, the range of feedback parameters that can stabilize the system has significantly shrunk. This highlights the presence of a tradeoff between stabilization and passivation, where redesigning the output for the latter can deteriorate the former. In the following Subsection, we address this issue, as well as the general arbitrariness of the choice in~\eqref{eq:symmetric_choice}.

\begin{figure}[t]
\centering
\subfloat[]{%
\includegraphics[width=0.5\linewidth]{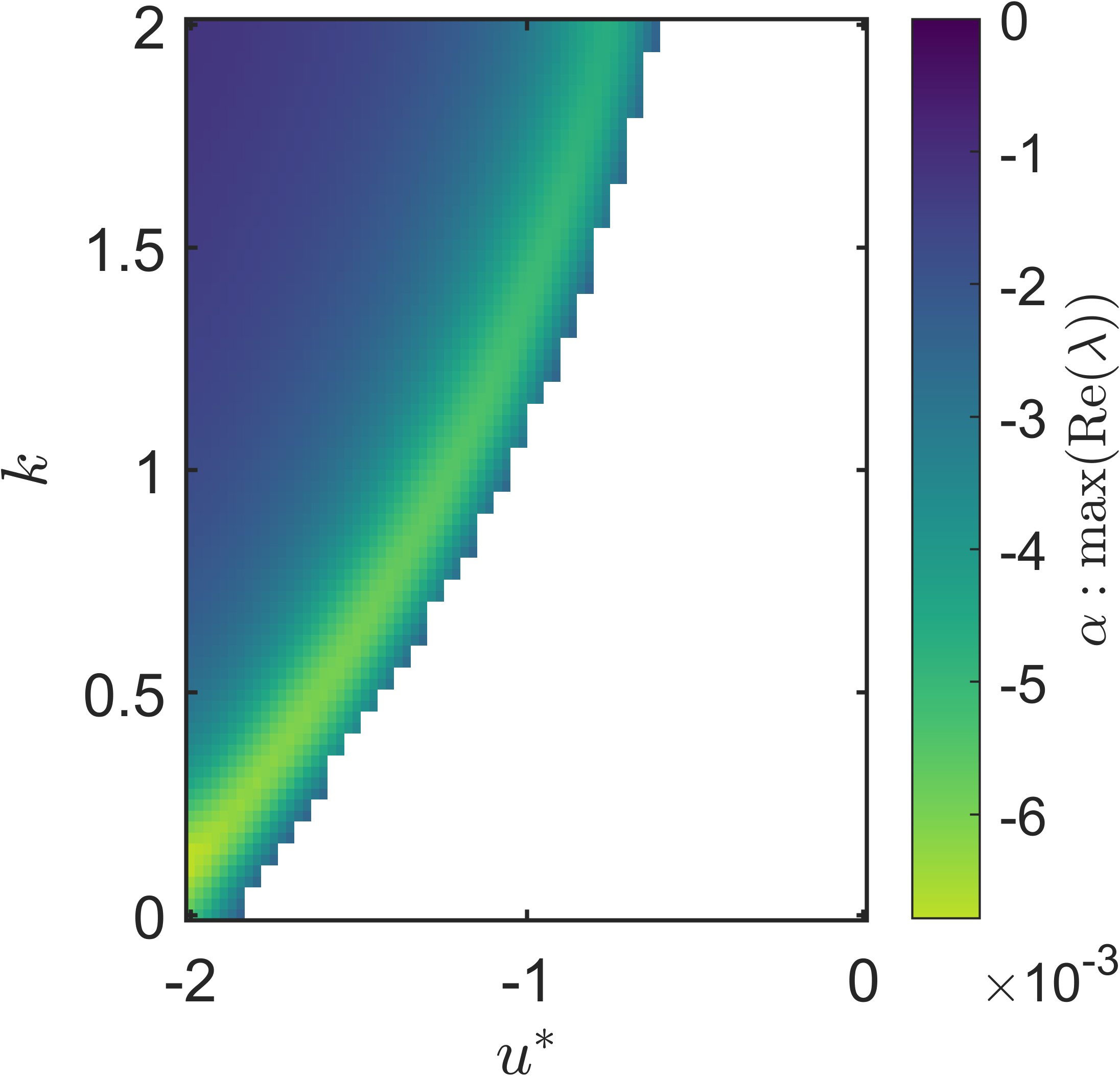}}
\hfill
\subfloat[]{%
\includegraphics[width=0.5\linewidth]{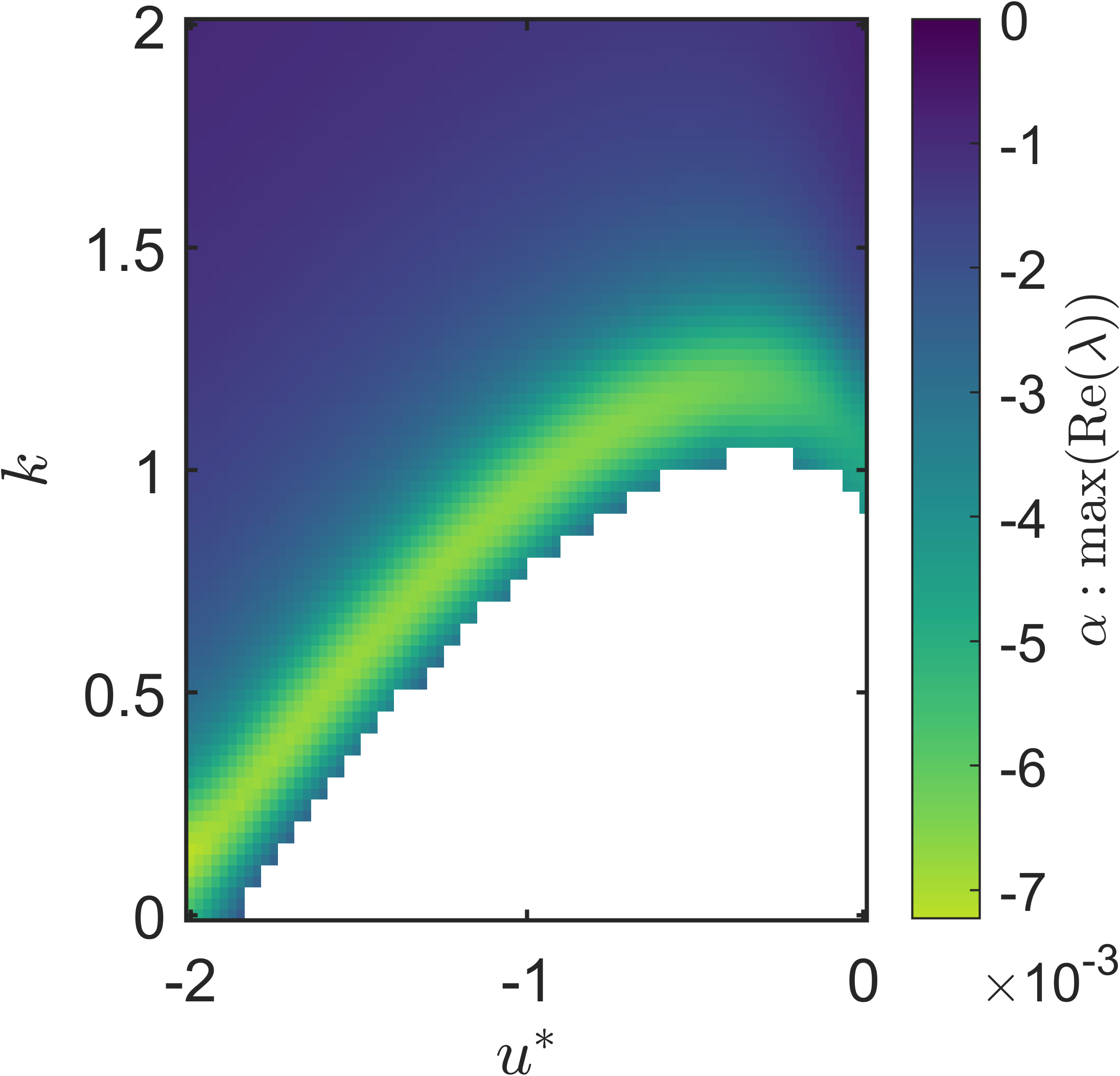}}
\caption{The spectral abscissa of the closed-loop Jacobian matrix $\Abf - k \gbf \cbf^\top$, similar to that in Fig.~\ref{fig:heatmaps}, but for new sets of controller and output parameters. (a) $y = x_1 + x_2$, $u^\star = -2$, $k = 0$ (Theorem~\ref{thm:passive-newc1c2}), (b) $y = x_1 - 0.29 x_2$, $u^\star = -2$, $k = 0$ (Example~\ref{ex:opt}).}
\label{fig:heatmaps-new}
\end{figure}

\subsection{Optimal Sensor Placement}\label{subsec:opt}

As noted in Section~\ref{sec:prelim_ep}, the standard output~\eqref{eq:stdc1c2} was motivated by the similarity of $y = x_1 - x_2$ to empirically recorded EEG in patients with epilepsy. Accordingly, any change in the output, such as~\eqref{eq:symmetric_choice}, models an empirical change in the number and location of sensors used to record brain activity. Therefore, while infinitely many choices of $c_1$ and $c_3$ can satisfy Theorem~\ref{thm:matching_obstruction}, they may differ significantly in terms of both their feasibility and optimality. In this section we provide an optimization framework that allows us to design a general linear output
\begin{align}\label{eq:gen-y}
y = \cbf^\top \xbf, \qquad \cbf = \begin{bmatrix}
    c_1 & c_2 & c_3 & c_4 & c_5 & c_6
\end{bmatrix}^\top,
\end{align}
such that the closed-loop system is not only strictly passive, but also optimal according to a user-specified objective
\begin{align}\label{eq:ell}
\min_{\cbf \in \real^6} \ell(\cbf).
\end{align}
The loss function $\ell(\cdot)$ can capture, e.g., the sparsity of $\cbf$ or its distance from an empirically-feasible output.

% address this fact by casting the output redesign as a convex optimization problem.

Consider again the Epileptor model~\eqref{eq:dyn}-\eqref{eq:y} under the passive feedback~\eqref{eq:fbform},~\eqref{eq:linear-phi}. Assume, for simplicity, that the feedback parameters $u^\star$ and $k$ are fixed such that the closed-loop system with $v = 0$ is asymptotically stable. 
The linearization of the closed-loop system around the equilibrium $\xbf^\star$ imposed by $u^\star$ is then given by
\begin{subequations}\label{eq:cl-lin}
\begin{align}
\dot{\tilde \xbf} &= (\Abf - k \gbf \cbf^\top) \tilde \xbf + \gbf v,
\\
\tilde y &= \cbf^\top \tilde \xbf.
\end{align}
\end{subequations}
Substituting~\eqref{eq:cl-lin} in Proposition~\ref{prop:prl} and combining it with~\eqref{eq:ell} results in the following optimization problem.

\begin{problem}\label{prob:opt}
Given $\Abf \in \real^{n \times n}$, $k \ge 0$, $\gbf \in \real^n$, and $\ell: \real^n \to \real$, find $\Pbf = \Pbf^\top \in \real^{n\times n}$ by solving
\begin{subequations}
\begin{align*}
\min_{\Pbf} \quad & \ell(\Pbf \gbf) 
\\
\text{s.t.} \quad & \Abf^\top \Pbf + \Pbf \Abf - 2 k \Pbf \gbf \gbf^\top \Pbf \preceq \zeros, 
\\
& \Pbf \succeq \zeros. \eqoprocend
% \\
% & c = (Pg)^\top ,
\end{align*}
\end{subequations}
% and let $\cbf = \Pbf \gbf$. \oprocend
\end{problem}

Once Problem~\ref{prob:opt} is solved, the desired output vector can be found by $\cbf = \Pbf \gbf$. Any solution to Problem~\ref{prob:opt} satisfies~\eqref{eq:PRLa}-\eqref{eq:PRLc} by construction, and the passivity of the full nonlinear system can be guaranteed similar to Theorem~\ref{thm:passive-newc1c2}. 

\begin{remark}\longthmtitle{Convexity of Problem~\ref{prob:opt}}
Despite the familiar Riccati-like form of the constraint in Problem~\ref{prob:opt}, it is in general non-convex. However, if $\ell(\cdot)$ is convex and $k = 0$ the problem becomes convex. Note that this requires the baseline input $u^\star$ to be such that it alone suffices for asymptotic stabilization of~\eqref{eq:dyn}. Sufficiently negative $u^\star$ can achieve this, as can be seen from Fig.~\ref{fig:heatmaps}. \oprocend
\end{remark}

% In the previous subsection we motivated the sensing vector
% $c = [1\;0\;0\;1\;0\;0]$ and showed that it yields local passivity.
% More generally, if the sensing configuration is not fixed, one may
% seek other sparse outputs that satisfy the same passivity relation.

% Suppose the pair $(A,g)$ is given, with $A$ being Hurwitz and the output map $y=cx$ is to be
% designed. We know that exists a Lyapunov matrix such that $P \succ 0$ and $A^\top P + P A \prec 0$. If we set the sensing vector $c\coloneqq c_{P,g}= (Pg)^\intercal$, then the system $(A,g,c_{P,g})$ is bound to be passive. 

% To promote sparse sensor placement, we search for a matrix $P$
% that induces a sparse vector $Pg$. This leads to the convex program
% \begin{align}
% \begin{aligned}
% \min_{P,c} \quad & \Phi(c) \\
% \text{s.t.}\quad 
% & A^\top P + P A \preceq 0, \\
% & P \succeq 0, \\
% & c = (Pg)^\top ,
% \end{aligned}
% \end{align}
% where $\Phi(c)$ is a convex function promoting sparsity in the sensing vector.

\begin{example}\longthmtitle{Feedback passivation with minimal deviation from standard output.}\label{ex:opt}
Consider the Epileptor model~\eqref{eq:dyn} with the general output~\eqref{eq:gen-y} and the passive feedback~\eqref{eq:fbform},~\eqref{eq:linear-phi}. We seek to design $\cbf$ such that $y$ is as close to the standard Epileptor output~\eqref{eq:stdc1c2} as possible while the closed-loop system is strictly passive. To achieve this, we can solve Problem~\ref{prob:opt} with $u^\star = -2$, $k = 0$, and the $\ell_1$ loss
\begin{align*}
\ell(\cbf) &= \big\|\cbf - \begin{bmatrix}
    1 & 0 & -1 & 0 & 0 & 0
\end{bmatrix}^\top\big\|_1
\\
&= |c_1 - 1| + |c_2| + |c_3 + 1| + |c_4| + |c_5| + |c_6|. 
\end{align*}
% In particular, we seek sensing configurations that depend
% primarily on the fast states $x_1$ and $x_2$, we encourage the remaining
% entries of $c$ to vanish. This is achieved by selecting
% \begin{equation}
%     \Phi(c) = \|[c_2,c_3,c_5,c_6]\|_1 + \psi(c_1,c_4),
% \end{equation}
% where the $\ell_1$ term promotes sparsity in the undesired sensor
% locations, driving $c_2,c_3,c_5,c_6$ toward zero, while
% $\psi(\cdot,\cdot)$ is a convex function that can encode desired
% structure between the remaining coefficients. For instance,
% $\psi(c_1,c_4)$ may encourage similar weights on the $x_1$ and $x_2$
% states, reflecting the symmetry between the two fast subsystems.
% 
% 
% The resulting $c$ defines an output map $y=cx$ that preserves
% the passivity structure while encouraging sparse sensor placement.
% 
% \subsection{Relaxed sensing structure}
% \label{subsec:relax_sense}
% We now specialize the objective to explore sensing configurations that
% remain close to the standard output $y = x_1 - x_2$ while allowing some
% flexibility in the relative weighting of the two fast states. In
% particular, we consider outputs of the form
% \[
% y = x_1 - \beta x_2,
% \]
% where the coefficient $\beta>0$ is not fixed a priori.
% 
% To encourage this structure, we choose the convex penalty
% \[
% \psi(c_1,c_4) = |c_1 - 1| + |c_4 + 1|,
% \]
% which biases the optimization toward the nominal coefficients
% $(c_1,c_4)=(1,-1)$ while still permitting deviations if required by the
% passivity constraints.
% 
% Applying the optimization procedure to the system $(A_e,g)$ yields the
% sensing vector
The solution to Problem~\ref{prob:opt} then yields
$
c = \begin{bmatrix}
    1 & 0 & -0.29 & 0 & 0 & 0
\end{bmatrix},
$
and the resulting output
\begin{align}\label{eq:opt-y}
y = x_1 - 0.29x_2,
\end{align}
gives the closest choice to~\eqref{eq:stdc1c2} such that the closed-loop system is strictly passive.
% This result indicates that passive sensing can be achieved using a
% relaxed version of the standard output, with a significantly smaller
% weight on the $x_2$ state.

% \subsection{Stabilization under relaxed sensing}

% The sensing vector obtained in \ref{subsec:relax_sense} corresponds to the relaxed output
% \begin{equation}
% \label{eq:rel_out}
% y = x_1 - 0.2867\,x_2 .
% \end{equation}

% Let
% \[
% y_e^\star = x_{e,1}^\star - 0.2867\,x_{e,2}^\star
% \]
% denote the steady output at the equilibrium $x_e^\star$.
% We consider the shifted passive feedback law
% \begin{equation}
% \label{eq:new_input}
%     u= u_e^\star - (y - y_e^\star),
% \end{equation}
% 
% where $u_e^\star = -2$ is the equilibrium input.

Analytically, the strict passivity of the resulting system can be shown similar to Theorem~\ref{thm:passive-newc1c2} with the storage function $V(\xbf) = \frac{1}{2} \tilde \xbf^\top \Pbf \tilde \xbf$ where
% The passivity analysis developed earlier guarantees local
% stability of the linearized dynamics under this sensing
% configuration. To further assess the stability properties of the nonlinear closed-loop
% system, we again employ the nonlinear SOS certification procedure
% described earlier. First, a sparse quadratic Lyapunov function of the
% form
% \[
% V(x) = x^\top P x
% \]
% is computed for the closed-loop dynamics under the feedback law
% in~\eqref{eq:new_input}. The resulting Lyapunov matrix is
% diagonal,
\begin{align*}
\Pbf = \operatorname{diag}\big(\begin{bmatrix}
0.97 & 0.08  & 0.48 & 2.26 & 14.45 & 458 \end{bmatrix}\big).
\end{align*}
It can further be shown via sum of squares analysis, similar to Theorem~\ref{thm:passive-newc1c2}, that $V(\xbf)$ solves the optimization problem~\eqref{eq:sostools} around the same equilibrium $\xbf^\star$ as in~\eqref{eq:xstar-um2}, with a region of attraction that includes (at least) the ball
% Using this candidate Lyapunov function, we perform a sum-of-squares
% (SOS) feasibility search to certify negativity of $\dot V(x)$ for the
% full nonlinear model. The resulting certificate guarantees that
% $\dot V(x)<0$ within the sublevel set $V(x)\le \rho^\star$, which
% corresponds to a Euclidean ball of radius
\begin{align*}
\Bc = \big\{\xbf \in \real^n \ \big| \ \|\xbf - 
\xbf^\star\| < r_2 \big\},
\end{align*}
where $r_2 = 1.08$ is about twice as large as $r_0$ obtained for $y = x_1 - x_2$.

Furthermore, Fig.~\ref{fig:heatmaps-new}b shows the result of a similar 2D sweep of feedback parameter space for the optimal output~\eqref{eq:opt-y}. Interestingly, the closed-loop system is asymptotically stable for a wider range of $(u^\star, k)$ compared to both Fig.~\ref{fig:heatmaps} (standard output $y = x_1 - x_2$) and Fig.~\ref{fig:heatmaps-new}a (suboptimal output $y = x_1 + x_2$).
\oprocend
% This provides empirical evidence that the feedback law in~\eqref{eq:new_input} yields stronger stabilization than the PSF input introduced in~\ref{sec:passivity_standard}, which is based on the standard output formulation. Furthermore, numerical simulations of the full nonlinear system indicate convergence of the closed-loop trajectories to $\x_e^\star$ from a wide range of initial conditions. 
\end{example}

The results of Example~\ref{ex:opt} show the promise of Problem~\ref{prob:opt} for relaxing the tradeoff between stabilization and passivation we had observed in Section~\ref{subsec:yc1pc2}. This can be particularly valuable in the greater context of PBC for seizure suppression, whereby various empirical constraints may restrict the set of outputs that can be measured and used for PBC. 

%\begin{conjecture}[Global stabilization under relaxed sensing]
%Consider the Epileptor model with the relaxed output in~\eqref{eq:rel_out}
%and feedback in~\eqref{eq:new_input}.
%Then the resulting closed-loop system is globally
%asymptotically stable at the equilibrium $x_e^\star$. 
%\end{conjecture}
%%%%%%%%%%%%%%%%%%%%%%%%%%%%%%%%%%%%%%%%%%%%%%%%%%%%%%%%%%%%%%%%%%%%%%%%%%%%%%%%

\section{Conclusions}

%This paper provided a systematic passivity analysis of the Epileptor neural mass model of brain dynamics under epilepsy. Using a combination of linear and nonlinear analyses, we proved that seizure dynamics can be both stabilized and passivated via linear passive feedback, despite the open-loop model lacking the internal dissipation as well as input-output matching conditions necessary for passivity. As such, this work provides the first rigorous analysis of the use of PBC for seizure suppression and lays a solid foundation for its further development as a novel treatment approach for drug-resistant epilepsy.Future research is needed to integrate PBC with predictive seizure forecasting and subject-specific, data-driven system identification, as well as the experimental testing of the effect of PBC in animal models and, eventually, patients with epilepsy.

This paper provided a systematic passivity analysis of the Epileptor neural mass model of brain dynamics under epilepsy. Using a combination of linear and nonlinear analyses, we showed that seizure dynamics can be stabilized and rendered passive via linear feedback, despite the open-loop model lacking the internal dissipation and input-output matching conditions required for passivity. These results provide the first rigorous foundation for PBC in seizure suppression and establish fundamental structural limitations of input-output matching in this setting. Future work will focus on integrating PBC with predictive seizure forecasting and subject-specific, data-driven system identification, as well as experimental validation in animal models and, ultimately, patients with epilepsy.
%%%%%%%%%%%%%%%%%%%%%%%%%%%%%%%%%%%%%%%%%%%%%%%%%%%%%%%%%%%%%%%%%%%%%%%%%%%%%%%%
\section{Acknowledgements}

This work was supported in part by National Science Foundation Award No. 2239654.

% \section*{APPENDIX A}
% Details of parameter values are provided here.

% \section*{APPENDIX B}

% \begin{figure}[!ht]
%     \centering
%     \includegraphics[width=\linewidth]{figures/xstar_vs_vref.png}
%     \caption{Stable nodes under PSF vs $y_p$}
%     \label{fig:placeholder}
% \end{figure}

%%%%%%%%%%%%%%%%%%%%%%%%%%%%%%%%%%%%%%%%%%%%%%%%%%%%%%%%%%%%%%%%%%%%%%%%%%%%%%%%

\bibliographystyle{ieeetr}
\bibliography{local}

\end{document}